\documentclass[11pt]{article}

\usepackage[letterpaper,margin=1in]{geometry}
\usepackage[T1]{fontenc}
\usepackage{lmodern}
\usepackage{microtype}
\usepackage{amsmath,amssymb,bm,amsthm,mathtools}
\usepackage{booktabs}
\usepackage{array}
\usepackage{graphicx}
\usepackage{float}
\usepackage{xcolor}
\usepackage{tikz}
\usetikzlibrary{arrows.meta,positioning,calc}
\usepackage[authoryear]{natbib}
\usepackage[hidelinks]{hyperref}
\usepackage{dsfont}

\usepackage{enumitem} 

\IfFileExists{algorithm.sty}{\usepackage{algorithm}}{}
\IfFileExists{algpseudocode.sty}{\usepackage[noend]{algpseudocode}}{}
\usepackage{caption}

\usepackage{comment}

\def \oo {\infty}

\def \E {\mathbb{E}}

\def \P {\mathbb{P}}

\def \ra {\rightarrow}
\def \to {\rightarrow}

\def \pf {\noindent \emph{Proof: }}
\def \QED {\hfill $\square$}

\def \RV {\mbox{\rm{RV}}}

\newtheorem{rmk}{\bf Remark}

\newtheorem{lemma}{\bf Lemma}
\newtheorem{prop}{\bf Proposition}

\newcommand{\CM}{\mathcal{CM}}
\newcommand{\BF}{\mathcal{BF}}
\newcommand{\CBF}{\mathcal{CBF}}
\newcommand{\St}{\mathcal{S}}
\newcommand{\arcosh}{\operatorname{arcosh}}

\newcommand{\rgm}[1]{\texttt{#1}}
\newcommand{\lU}{\lambda_U}
\newcommand{\lL}{\lambda_L}
\newcommand{\kU}{\kappa_U}
\newcommand{\kL}{\kappa_L}

\allowdisplaybreaks

\title{A new tractable Archimedean copula for full-range tail dependence}
\author{Lei Hua\thanks{School of Mathematical and Statistical Sciences, Northern Illinois University, United States. Email: Lhua@niu.edu}}
\date{\today}

\begin{document}
\maketitle

\begin{abstract}
Parametric copulas capable of capturing full-range tail dependence in both lower and upper tails are useful in many applications, including as pair-copula building blocks in vine copula models. However, it is challenging to create such copulas that have closed-form density functions, which limits their practical applicability in complex modeling scenarios. To the best of our knowledge, all existing full-range tail dependence copulas do not have closed-form density functions. In this paper, we propose a new Archimedean copula family, called the \emph{full-range Archimedean copula of type I} (FRA1), that captures full-range tail dependence in both tails and admits a closed-form density function. The copula has two parameters, one controlling the strength of dependence in the lower tail and the other controlling that in the upper tail. We provide results on dependence properties of the copula, and discuss how to effectively estimate the parameters based on maximum likelihood. We build a vine copula model only using the FRA1 copula, and demonstrate that the model can achieve comparable performance of an ordinary vine copula model while avoiding the need of pair-copula model selection. The new copula is also applied to data from the Medical Expenditure Panel Survey, demonstrating that an FRA1 vine model can effectively capture diverse temporal and cross-sectional dependence patterns in medical expenditures. Finally, the new copula is implemented in \texttt{CopulaOne}\footnote{https://github.com/larryleihua/CopulaOne}, an R package for full-range tail dependence copulas.
\end{abstract}    

\noindent\textbf{Keywords:}
Asymptotic dependence; asymptotic independence; closed-form; tail order; tail dependence; vine copula.

\section{Introduction}
Tail dependence patterns that can be captured by a parametric copula often matter the most in applications, especially in fields such as insurance, finance, and quantitative risk management. Many bivariate parametric copulas have been proposed in the literature, and they are widely used mainly because of the availability of their closed-form density functions, which allow for constructing relatively more complex models while performing parameter estimation using the maximum likelihood method. However, most existing parametric copulas can only model very specific dependence patterns in the tails: either asymptotic dependence or asymptotic independence; either being symmetric or asymmetric between the lower and upper tails. To address this issue, some parametric copulas have been created to model full-range tail dependence, such as the GGEE copula proposed in \cite{Hua2017}, and the PPPP copula proposed in \cite{Su2017b}. However, these copulas do not have closed-form density function, although the PPPP copula is very close to achieve that, for which a numerical root-finding is needed to evaluate the density function. Although estimation can be performed using these copulas, it is currently still inconvenient to use them to build more sophisticated models, such as vine copula models, or those include not only copulas but also complex marginal models, unless suitable likelihood-free estimation methods are further developed, for which we refer to a recent effort in \cite{Hua2026c} for the GGEE and the PPPP copulas. That being said, it is still desirable to have a bivariate copula that can model full-range tail dependence in both lower and upper tails and has a closed-form density function, which is the main goal of this paper.

We construct the new copula based on the Archimedean copula family, which has been widely studied and popular choices for many applications. Dependence patterns in the tails of an Archimedean copula can be analyzed through the behavior of its generator function $\psi(s)$ near zero and infinity. The key idea of the new Archimedean copula is to design a generator function that allows for flexible tail behavior in both limits, thereby enabling full-range tail dependence in both the lower and upper tails. In the literature, several methods have been studied to create Archimedean generators from suitable functional forms. Among others, \cite{Joe1996, Joe1997} use mixtures of positive powers of max-infinitely divisible distributions to create new Archimedean generators; \cite{Bacigal2015} studies aggregation of Archimedean generators to create new copulas; \cite{Kelner2021} employs a compound of an existing generator to construct new Archimedean copulas; \cite{Schilling2012} provides a comprehensive treatment of Bernstein functions that are useful to preserve functional properties of Archimedean generators; \cite{DiBernardino2017} proposes an upper-patching method to modify tail behavior of Archimedean generators; \cite{Gorecki2021} applies outer-power transformation to introduce more parameters for Archimedean copulas.

To construct Archimedean copulas that can capture very flexible dependence patterns in both lower and upper tails, respectively, we need at least two parameters. Among many different methods of introducing additional parameters in Archimedean generators, we consider creating a generator in the form of $\psi_{\eta, \theta} = G_\eta \circ W_\theta$. Then the Archimedean copula can be written as
\begin{align}
C_{\eta, \theta}(u,v)=\psi_{\eta, \theta}(\psi_{\eta, \theta}^{-1}(u)+\psi_{\eta, \theta}^{-1}(v)), \quad u, v \in [0,1]. \notag
\end{align}
We let $G_\eta(s)$ dominate the right tail of $\psi_{\eta, \theta}(s)$ and thus the lower tail of $C_{\eta, \theta}$; similarly, let $W_\theta(s)$ dominate the left tail of $\psi_{\eta, \theta}(s)$ and thus the upper tail of $C_{\eta, \theta}$. We refer to \cite{Charpentier2009} for a comprehensive study on how tail behavior of Archimedean generators affect the dependence patterns of the corresponding copulas. With carefully chosen functions of $G$ and $W$, we have achieved a new Archimedean copula that we call the \emph{full-range Archimedean copula of type I}, referred to as the \emph{FRA1 copula} in what follows; to keep the name short, ``full-range'' instead of ``full-range tail dependence'' used here is only for its lower and upper tail dependence behavior. We use type I here, because we believe that there may be other types of such Archimedean copulas. However, we will reserve such naming convention only for those highly selected Archimedean copulas that might be discovered in the future. The FRA1 copula has the following merits:
\begin{enumerate}
\item (Flexibility) It can capture full-range tail dependence in both lower and upper tails.
\item (Closed-form) It has elementary closed-form functions for the following: copula density function, copula cumulative distribution function (CDF), the $h$-function: $h(u,v):=\P(U \leq u \mid V=v)$, and the Blomqvist's beta.
\item (Parsimony \& Interpretability) The parameters $\eta$ and $\theta$ control the strength of dependence in the lower and upper tails, respectively. 
\item (Monotonicity) The strength of dependence in the lower tail is monotonically increasing in $\eta \in [-1,1)$, and the strength of dependence in the upper tail is monotonically increasing in $\theta \in [-1,1)$, with $(\eta, \theta)=(-1, -1)$ leading exactly to the independence copula.
\end{enumerate}

Because $\psi_{\eta, \theta}$ is completely monotone, it can be used to define an Archimedean copula in any finite dimension. The construction of the generator depends on many nice properties of completely monotone functions, and (complete) Bernstein functions play an important role in ensuring these properties. We will discuss in detail the construction and properties of $\psi_{\eta, \theta}$, and provide explicit closed-form formulas for the associated copula density, CDF, $h$-function, etc. Dependence properties of the resulting copula will also be thoroughly analyzed, and they include results on lower and upper tail orders and tail dependence coefficients, Kendall's tau, Spearman's rho, Blomqvist's beta, etc. 

Vine copulas provide a flexible framework for constructing high-dimensional distributions through bivariate copulas \citep{Bedford2002,Aas2009,Czado2019}. In addition to serving as an ordinary pair-copula for building up vine copulas, bivariate FRA1 copulas can be used solely to build vine copulas. Because of its flexibility in modeling all kinds of tail dependence structures, as well as its closed-form density functions, $h$-functions, and so on, the FRA1 copula is particularly suitable for building high-dimensional vine copula models. Then such a vine copula built with only the FRA1 copula may lead to better interpretation while avoiding the model selection that is typically required when using a variety of different bivariate copulas as the building blocks. We will build a C-vine structure using the FRA1 copula as the sole bivariate building block, and compare its performance with vine copulas constructed from a mix of different bivariate copulas. Simulation study tells us that the FRA1 vine can provide comparable performance in terms of overall model fit and tail dependence capture.

For real applications, the FRA1 copula is particularly useful when there are many different dependence structures that need to be captured simultaneously, or when the dependence structures are unknown and need to be estimated from data. We use the FRA1 vine copula to model the temporal and cross-sectional dependence structures in a medical expenditure survey data, and find some interesting dependence patterns that would be difficult to capture and interpret using traditional vine copulas with a mix of different bivariate copulas.

The remainder of this paper is organized as follows. In Sections \ref{sec:FR} and \ref{sec:archimedean}, we briefly review the concepts of full-range tail dependence and Archimedean copulas. In Section \ref{sec:model}, we introduce the construction of the FRA1 copula and discuss its closed-form status. Section \ref{sec:dependence-properties} provides results for the dependence properties of the FRA1 copula. Section \ref{sec:MLE} discusses the nonregular issue of the likelihood function and how to conduct maximum likelihood estimation accordingly.
Section \ref{sec:vine} presents the application of the FRA1 copula in building vine copulas and its comparison to ordinary vine copulas. Section \ref{sec:empirical} contains the empirical study with medical expenditure survey data. Finally, Section \ref{sec:conclude} concludes the paper, and all proofs are reported in the Appendix.

\subsection{Full-range tail dependence}
\label{sec:FR}
Let $C$ be a bivariate copula and let $\widehat C(u,v) =u + v-1+C(1-u,1-v)$ be its survival copula. Then the lower and upper tail dependence parameters/coefficients are defined as
$\lambda_L =\lim_{u\downarrow0}C(u,u)/u, \lambda_U =\lim_{u\downarrow0} \widehat C(u,u)/u,$ whenever the limits exist. Asymptotic dependence is referred to the case when the tail dependence coefficient is positive, and asymptotic independent is when the coefficient is zero.

Under asymptotic independence, suppose that the diagonal tail probabilities
are regularly varying \citep{Bingham1987}: $C(u,u) =u^{\kappa_L}\ell_L(u), \widehat C(u,u) =u^{\kappa_U}\ell_U(u), u\downarrow0$, where $\ell_L$ and $\ell_U$ are slowly varying at zero.  Then $\kappa_L$ and $\kappa_U$ are referred to as the lower and upper tail orders; see \cite{Hua2011b} for the concept of tail order.

A copula having full-range tail dependence basically means that the values of $\lambda$ can be exhausted in the range of $[0,1]$, and the values of $\kappa$ can be exhausted in the range of $[1,2]$ within a single bivariate parametric copula; refer to Definition 2.1 of \cite{Su2017b} for a formal definition. The GGEE and the PPPP copulas are two examples that have full-range tail dependence in both lower and upper tails.

\subsection{Archimedean copulas and complete monotonicity}
\label{sec:archimedean}
A function $f:(0,\oo)\to \mathbb{R}$ is \emph{completely monotone}, denoted as $f\in\CM$, if $f \in C^\infty((0,\oo))$, the set of all infinitely differentiable functions defined on the open interval $(0, \oo)$, and $(-1)^n f^{(n)}(t) \geq 0, \forall t >0$ for all $n =0,1,2,\ldots$. Let $\psi:[0,\oo)\to (0,1]$ be continuous, strictly decreasing, satisfy $\psi(0)=1$ and $\psi(t) \downarrow 0$ as $t\to\oo$, and be completely monotone. Writing $\phi=\psi^{-1}$, the associated Archimedean copula is
\begin{align}
C(u,v) &=\psi(\phi(u)+\phi(v)). \label{eq:archimedean-copula-general}
\end{align}
Complete monotonicity is sufficient for Eq.~\eqref{eq:archimedean-copula-general} to define an Archimedean copula in every dimension \citep{McNeil2009}. Such $\psi$ is also the Laplace transform (LT) of a nonnegative random variable. For an Archimedean copula, the behavior of $\psi(t)$ as $t\downarrow 0$ determines the upper tail, whereas the decay of $\psi(t)$ as $t\to\oo$ determines the lower tail \citep{Charpentier2009}.

The complete monotonicity of an Archimedean generator plays an important role in the construction, and in Appendix \ref{appendix:generator-properties}, we provide some useful results on complete monotonicity and related results, which will be used to prove the complete monotonicity of the proposed Archimedean generator; refer to \cite{Schilling2012} for details.

\section{The FRA1 copula}
\label{sec:model}
In this section, we introduce the construction of the FRA1 copula and discuss the closed-form status of various components involved in the construction.

\subsection{The Archimedean generator}
To construct a bivariate Archimedean copula that has full-range tail dependence in both lower and upper tails, we first need to have a flexible Archimedean generator that has at least two parameters, one controlling the lower tail and the other controlling the upper tail. There are many closure properties related to completely monotone functions, and they provide a convenient way to construct a flexible Archimedean generator. In this paper, we focus on the following construction, where two parameters $\eta$ and $\theta$ are introduced to control the lower and upper tails, respectively:
\begin{align}
\psi_{\eta,\theta}(t) =G_\eta(W_\theta(t)).\label{eq:generator}
\end{align}
Then the FRA1 copula will be written as
\begin{align}
C_{\eta,\theta}(u,v) &=\psi_{\eta,\theta}(\phi_{\eta,\theta}(u)+\phi_{\eta,\theta}(v)). \label{eq:archimedean-copula}
\end{align}

\begin{prop}
\label{prop:generator-properties}
Suppose that $G_\eta$ and $W_\theta$ in Eq.~\eqref{eq:generator} are defined as:
\begin{align}
G_\eta(s) &= 
\begin{cases}
\displaystyle \exp\{-F_{-\eta}(s)\}, \quad -1\leq\eta\leq0,\\[2mm]
\displaystyle \left(1+\frac{F_\eta(s)}{b_\eta}\right)^{-b_\eta},  \quad b_\eta =(1-\eta)\eta^{-2}, \quad 0<\eta<1,
\end{cases} \label{eq:G-definition} \\
W_\theta(t)
&= \begin{cases}
J_{-\theta}(t), \quad -1\leq\theta\leq0,\\[1mm]
J_0(K_{1-\theta}(t)), \quad 0<\theta<1,
\end{cases} \label{eq:W-definition}
\end{align}
where
$F_p(s) = p^{-2}[\cosh(pA(s))-1],  0 \leq p \leq 1,$
with $F_0(s) = [A(s)]^2/2$ being understood as the limit as $p \downarrow 0$; $A(s) =\arcosh(1+s) = \log(s+1+\sqrt{s(s+2)}), s \geq 0$;
\begin{align}
J_p(t) =
\begin{cases}
(S_p(1/t))^{-1},&t>0,  \\
0,                              &t=0,
\end{cases} \notag
\end{align}
with $S_p(x) = T_p(x)-d_p$, $T_p(x) = F_p^{-1}(F_p(x)+1)$, $d_p :=F_p^{-1}(1)$, and $K_a(t) =(1+t^a)^{1/a}-1, 0<a\leq 1$.
Then $\psi_{\eta,\theta}$ is a completely monotone function for $-1\leq\eta<1$ and $-1\leq\theta<1$, with $\psi_{\eta,\theta}(0) = 1$ and $\psi_{\eta,\theta}(\oo) = 0$, and thus, $C_{\eta,\theta}$ defined in Eq. (\ref{eq:archimedean-copula}) is a valid bivariate Archimedean copula.
\end{prop}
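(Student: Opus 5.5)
The proof rests on the standard closure rule: if $g\in\CM$ and $w$ is a Bernstein function ($w\ge0$, $w'\in\CM$), then $g\circ w\in\CM$. Since $\psi_{\eta,\theta}=G_\eta\circ W_\theta$, it suffices to prove two facts and then compose:
\begin{itemize}
\item[(i)] $G_\eta\in\CM$ with $G_\eta(0)=1$ and $G_\eta(\oo)=0$;
\item[(ii)] $W_\theta$ is a Bernstein function with $W_\theta(0)=0$ and $W_\theta(\oo)=\oo$.
\end{itemize}
These give $\psi_{\eta,\theta}(0)=1$ and $\psi_{\eta,\theta}(\oo)=0$. Strict decrease and continuity follow because $G_\eta$ and $W_\theta$ are strictly monotone. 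The copula claim then follows from the cited fact that a completely monotone generator yields an Archimedean copula in every dimension, in particular in dimension two \citep{McNeil2009}.

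For (i), I would show that $F_p$ is a Bernstein function for $0\le p\le1$.
\begin{itemize}
\item Writing $\cosh(pA)=\cosh(p\,\arcosh(1+s))$, one has $F_p(s)=p^{-2}[T_p^{\mathrm{Cheb}}(1+s)-1]$ in Chebyshev form. This is better handled through the representation $\cosh(p\,\arcosh x)=\tfrac12[(x+\sqrt{x^2-1})^p+(x-\sqrt{x^2-1})^p]$.
\item An alternative route: $A(s)=\arcosh(1+s)$ behaves like $\sqrt{2s}$ near $0$, and $A'(s)=1/\sqrt{s(s+2)}$, which is a product of the completely monotone functions $s^{-1/2}$ and $(s+2)^{-1/2}$. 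So $A$ is Bernstein.
\item Composition with $\cosh(p\cdot)-1$ is not automatically Bernstein, since $\cosh$ is convex. I would instead check directly that $F_p'(s)=p^{-1}\sinh(pA(s))\,A'(s)$ is completely monotone.
\item Substituting $s=\cosh y-1$, the target is to exhibit $F_p'$ as a Stieltjes-type integral, i.e.\ to show $F_p$ is a complete Bernstein function. I would try to verify that $F_p$ maps the upper half-plane into itself, using the conformal map $\arcosh$, which sends the slit plane to a half-strip, and $p\le1$, which keeps the argument bounded.
\end{itemize}
Granting this, $e^{-F}$ and $(1+F/b)^{-b}$ are completely monotone, because $e^{-x}$ and $(1+x/b)^{-b}$ are completely monotone and composition with a Bernstein function preserves complete monotonicity. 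The boundary values follow from $F_p(0)=0$ and $F_p(\oo)=\oo$.

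For (ii), the case $\theta>0$ reduces to the case $\theta\le0$, because $K_a(t)=(1+t^a)^{1/a}-1$ for $0<a\le1$ is a Bernstein function: it is a composition of Bernstein functions ($t^a$, then $x\mapsto(1+x)^{1/a}-1$). The last point needs care, since $(1+x)^{1/a}$ is convex. So I would instead verify $K_a'(t)=t^{a-1}(1+t^a)^{1/a-1}$, and check that it is completely monotone; this may require writing it as a function of $t^a$ and using complete Bernstein closure. It then suffices that $J_p$ is Bernstein for $0\le p\le1$. Here $J_p(t)=1/S_p(1/t)$, and $t\mapsto 1/f(1/t)$ maps complete Bernstein functions to complete Bernstein functions. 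So the task becomes showing that $S_p(x)=F_p^{-1}(F_p(x)+1)-d_p$ is complete Bernstein with $S_p(0)=0$. I would use that the inverse of a complete Bernstein function of suitable type is again of that type in the Pick-function sense, and that $F_p(x)+1$ composed with $F_p^{-1}$ preserves the half-plane mapping property.

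The main obstacle is this last verification: showing that $T_p=F_p^{-1}\circ(F_p+1)$ maps the upper half-plane to itself, together with the complete-Bernstein property of $F_p$ itself. I would attack both via the explicit $\cosh/\arcosh$ conformal geometry. Concretely, $T_p$ corresponds to the map $z\mapsto \arcosh(\cosh z + p^2)/p$ type shifts in the strip coordinates, and I would check the sign of its imaginary part directly.
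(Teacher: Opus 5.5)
Your reduction is the same as the paper's. You write $\psi=G_\eta\circ W_\theta$ and use (S8), take $G_\eta=g\circ F_p$, and build $W_\theta$ from $J_p=1/S_p(1/\cdot)$, $S_p=T_p-d_p$ and $K_a$. So everything rests on $F_p$, $T_p$, $K_a\in\CBF$.

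The half-plane argument you outline for $F_p$ works, and is simpler than the paper's. $A=\arcosh(1+\cdot)$ maps the upper half-plane $\mathbb{H}$ onto $\Sigma=\{\operatorname{Re} w>0,\ 0<\operatorname{Im} w<\pi\}$, and $0<\operatorname{Im}(pw)<p\pi\le\pi$ keeps $\cosh(pw)$ in $\mathbb{H}$.

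\textbf{The gap.} The gap is at the step you flag yourself, $T_p\in\CBF$, and the tool you name for it is false. The inverse of a non-affine CBF is never Bernstein: here $F_p^{-1}$ is convex on $(0,\oo)$ for $p<1$. Also, $F_p^{-1}$ is defined only on $F_p(\mathbb{H})$, not on $\mathbb{H}$. What is actually needed is that $F_p$ is univalent on $\mathbb{H}$ and that $F_p(\mathbb{H})+1\subseteq F_p(\mathbb{H})$.

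In strip coordinates, $T_p=A^{-1}\circ\nu_p\circ A$ with $\nu_p(w)=p^{-1}\arcosh(\cosh(pw)+p^2)$ (note $pw$, not $w$). So you need $\nu_p(\Sigma)\subseteq\Sigma$. Checking the sign of $\operatorname{Im}\nu_p$ is not enough, since positivity is automatic. The real content is $\operatorname{Im}\nu_p<\pi$, and the crude bound only gives $\pi/p$.

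\textbf{The missing observation.} $\cosh$ maps $\Sigma_p=\{\operatorname{Re} w>0,\ 0<\operatorname{Im} w<p\pi\}$ onto the part of $\mathbb{H}$ to the right of the hyperbola branch $\{\cosh x\cos(p\pi)+i\sinh x\sin(p\pi): x>0\}$. That region is invariant under $\zeta\mapsto\zeta+p^2$. Hence $\cosh(pw)+p^2\in\cosh(\Sigma_p)$, so $\arcosh(\cosh(pw)+p^2)\in\Sigma_p$ and $\nu_p(w)\in\Sigma$. Therefore $T_p=\cosh(\nu_p\circ A)-1$ is a Pick function positive on $(0,\oo)$, i.e. it lies in $\CBF$. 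For $p=0$, $\nu_0(w)=\sqrt{w^2+2}$, and squaring maps $\Sigma$ onto the part of $\mathbb{H}$ to the right of a parabola, so the same argument applies.

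Your sketch for $K_a$ is also incomplete, since $x\mapsto(1+x)^{1/a}-1$ is not Bernstein. It closes at once, though: for $z\in\mathbb{H}$ you have $0<\arg(1+z^a)<a\arg z$, so $(1+z^a)^{1/a}-1\in\mathbb{H}$.

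\textbf{Comparison with the paper.} With these patches your route is correct but genuinely different from the paper's, which never leaves the real line. The paper proves $F_p'\in\St$ from an explicit integral representation, so $q_p=1/F_p'\in\CBF$. It then notes that $\Theta_z=F_p^{-1}(F_p+z)$ solves $\partial_z\Theta_z=q_p(\Theta_z)$. A flow lemma (Euler steps $\mathrm{id}+hq\in\CBF$, plus closure under composition and pointwise limits) gives $T_p=\Theta_1\in\CBF$. The same lemma with $q(y)=y^{1-a}/a$ handles $K_a$.

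The translation invariance of $F_p(\mathbb{H})$ that your route needs is the complex-analytic counterpart of that semigroup structure. Your version is shorter and avoids the integral table. The paper's is more structural, applying to any $F$ with $1/F'\in\CBF$.
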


The proof of Proposition~\ref{prop:generator-properties} is provided in Appendix~\ref{appendix:generator-properties}. Note that, all the functions involved in the construction of the FRA1 copula are elementary functions whose inverse functions can also be expressed in closed form. 

\subsection{Closed-form status of the FRA1 copula}
The FRA1 copula has closed-form CDF, density function, $h$-function, and Blomqvist's beta, which makes it analytically tractable for various applications. Closed-form status for relevant quantities is summarized in Table~\ref{tab:closed-form-status}, and the exact closed-form expressions are provided in Appendix \ref{appendix:closed-form-expressions}.

\begin{table}[H]
\centering
\caption{Closed-form status of the FRA1 copula.}
\label{tab:closed-form-status}
\begin{tabular}{p{0.22\textwidth}p{0.3\textwidth}p{0.33\textwidth}}
\toprule
Quantity & Required analytic ingredients & Status\\
\midrule
CDF & $\psi$ and $\psi^{-1}$ & Closed-form: Eqs.~\eqref{eq:generator}\eqref{eq:psi-inverse} \\
Density & $\psi^{-1}$, $\psi'$, and $\psi''$ & Closed-form: Eqs.~\eqref{eq:psi-inverse}\eqref{eq:psi-derivatives} \\
$h$-function & $\psi^{-1}$ and $\psi'$ & Closed-form: Eqs.~\eqref{eq:psi-inverse}\eqref{eq:psi-derivatives}\\
Inverse $h$-function & additionally $(\psi')^{-1}$ & Root finder for $(\psi')^{-1}$: Eq.~\eqref{eq:inverse-h-formal}\\
Kendall's $\tau$ & integral of $t[\psi'(t)]^2$ & Scalar integral: Eq.~\eqref{eq:Kendall-tau}\\
Spearman's $\rho_S$ & integral involving $\psi$ and $\psi'$ & Bivariate integral: Eq.~\eqref{eq:Spearman-rho}\\
Blomqvist's $\beta_B$ & $\psi$ and $\psi^{-1}$ at $1/2$ & Closed-form:  Eq.~\eqref{eq:Blomqvist-beta}\\
\bottomrule
\end{tabular}
\end{table}

For the inverse $h$-function, in general, $(\psi_{\eta,\theta}')^{-1}$ does not have a closed-form expression and may need to be computed numerically using a root-finding algorithm. However, $\psi_{\eta,\theta}'$ is monotone, which ensures the uniqueness of the root, and a numerical solution can be reliably obtained. The inverse $h$-function is mainly used in simulation and sampling from the copula, and it is not used when evaluating the likelihood of the copula or vine copula models built from it. Many copulas implemented in the popular R package \texttt{rvinecopulib}, such as BB1 in \cite{Joe1997}, also do not have closed-form inverse $h$-functions but still allow for effective use in practice.

For assessing overall dependence, Blomqvist's beta provides a convenient closed-form measure, and Kendall's tau and Spearman's rho need to be evaluated via numerical integration. However, the patterns of strength of dependence explained by these measures are similar across different combinations of parameters (see Fig.~\ref{fig:dependence}), so that Blomqvist's beta can serve as a reliable proxy for overall dependence in practice. Moreover, usually a numerical evaluation of Kendall's tau and Spearman's rho is fairly straightforward and computationally feasible.

Therefore, the FRA1 copula is a very attractive option for practical applications, considering that the functions relevant for estimation and inference all have closed-form expressions, and moreover, the FRA1 copula has very flexible dependence structures and interpretable parameters, which will be discussed in Section~\ref{sec:dependence-properties}.

\section{Dependence properties}
\label{sec:dependence-properties}
In this section, we study the dependence properties of the FRA1 copula, including its tail order and tail dependence coefficients, Kendall's tau,  Spearman's rho, and Blomqvist's beta.

\medskip

The following result establishes that the strength of dependence in the upper tail is monotonically increasing with respect to the parameter $\theta$ from $-1$ to $1$, while the strength of dependence in the lower tail is monotonically increasing with respect to the parameter $\eta$ from $-1$ to $1$.

\begin{prop}
\label{prop:tail-map}
For the FRA1 copula $C$ defined in Eq.~\eqref{eq:archimedean-copula}, the upper tail dependence coefficient and the upper tail order depend only on $\theta$ and satisfy
\begin{align}
\lambda_U(\theta)  = \max\{2-2^{1-\theta},0\}, \quad \kappa_U(\theta) =\max\{1-\theta,1\}, \quad -1\le\theta<1, \notag
\end{align}
and the lower tail dependence coefficient and the lower tail order depend only on $\eta$ and satisfy
\begin{align}
\lambda_L(\eta) = \begin{cases}
0, & -1\leq\eta\leq0,\\[1mm]
2^{-(1-\eta)/\eta}, & 0<\eta<1,
\end{cases} \qquad \kappa_L(\eta) = \max\{2^{-\eta},1\}, \quad -1\le\eta<1. \notag
\end{align}
\end{prop}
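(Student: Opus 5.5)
The plan is to reduce everything to the asymptotic behaviour of $\psi_{\eta,\theta}=G_\eta\circ W_\theta$ at $0$ (for the upper tail) and at $\infty$ (for the lower tail). I would then invoke the standard correspondence for Archimedean copulas from \cite{Charpentier2009}:
\begin{itemize}
\item if $1-\psi\in\RV_\alpha$ at $0$ with $\alpha\in(0,1)$, then $\lambda_U=2-2^\alpha$;
\item if $\psi\in\RV_{-\alpha}$ at $\infty$, then $\lambda_L=2^{-1/\alpha}$;
\item if $-\log\psi$ is regularly varying at $\infty$ with index $p$, then $C(u,u)=\psi(2\phi(u))$ behaves like $u^{2^{p}}$ up to slowly varying factors.
\end{itemize}
The last item follows because $-\log\psi(2\phi(u))\sim 2^{p}\{-\log\psi(\phi(u))\}=2^p(-\log u)$. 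Since $W_\theta(0)=0$ and $W_\theta$ is increasing, the upper tail is driven by $W_\theta$ near $0$ and the lower tail by $G_\eta$ near $\infty$. This is why each tail depends on only one parameter, provided $W_\theta(t)$ is regularly varying of index $1$ at $\infty$ for every $\theta$.

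\emph{Asymptotics of the building blocks.}
\begin{itemize}
\item Near $s=0$, $\cosh(p\,\arcosh(1+s))$ is analytic in $s$, so $F_p(s)=s+O(s^2)$. Hence $G_\eta(s)=1-s+O(s^2)$ in both branches.
\item As $s\to\infty$, $A(s)=\log(2s)+o(1)$. This gives $F_p(s)\sim(2s)^p/(2p^2)$ for $p>0$, and $F_0(s)\sim\tfrac12\log^2(2s)$.
\item For $J_p$, I need the behaviour of $S_p(x)$ as $x\to\infty$. By the mean value theorem, $T_p(x)-x\approx 1/F_p'(x)\asymp x^{1-p}$. Therefore $S_p(x)=x+c_px^{1-p}(1+o(1))$ for $0<p<1$, while $S_1(x)=x+1-d_1$ exactly. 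Inverting gives $J_p(t)=t-c_pt^{1+p}(1+o(1))$ as $t\downarrow0$, with $c_p>0$.
\item For $K_a$, we have $K_a(t)\sim t^a/a$ as $t\downarrow0$ and $K_a(t)\sim t$ as $t\to\infty$.
\item As $t\to\infty$, $S_p(x)\sim S_p'(0)x$ as $x\downarrow0$ (using $T_p(0)=d_p$), so $J_p(t)$ grows linearly. Hence $W_\theta(t)$ is regularly varying of index $1$ at $\infty$ for every $\theta$.
\end{itemize}

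\emph{Upper tail.} Consider first $0<\theta<1$. Here $1-\psi(t)\sim J_0(K_{1-\theta}(t))\sim t^{1-\theta}/(1-\theta)$, so $1-\psi\in\RV_{1-\theta}$ with index in $(0,1)$. This yields $\lambda_U=2-2^{1-\theta}$ and $\kappa_U=1$.

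Now take $-1\le\theta\le0$ and set $p=-\theta$. Then $1-\psi(t)=t-c_pt^{1+p}+O(t^2)$, so $\lambda_U=0$. For the tail order I substitute into
\[
\widehat C(u,u)=2u-1+\psi(2\phi(1-u)).
\]
The linear terms cancel, leaving $\widehat C(u,u)\sim(2^{1+p}-2)c_p u^{1+p}$. This gives $\kappa_U=1+p=1-\theta$. At $p=1$ the $O(t^2)$ terms from $G$ also enter; this is consistent with independence at $(-1,-1)$, where the constant must combine to give $u^2$.

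\emph{Lower tail.} For $0<\eta<1$, the asymptotics above give $\psi(t)\asymp(F_\eta(ct)/b_\eta)^{-b_\eta}$. This is regularly varying with index $-b_\eta\eta=-(1-\eta)/\eta$, so $\lambda_L>0$ is an explicit power of $2$ and $\kappa_L=1$. I would recheck the exponent carefully against the stated formula, since the reciprocal convention in Charpentier--Segers is easy to misapply.

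For $-1\le\eta<0$, write $p=-\eta$. Then $-\log\psi=F_p\circ W$ is regularly varying with index $p$, giving $\kappa_L=2^{p}=2^{-\eta}$ and $\lambda_L=0$.

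For $\eta=0$, $-\log\psi\sim\tfrac12\log^2(2ct)$ is slowly varying, which gives $\kappa_L=1$. To see that $\lambda_L=0$, note that
\[
\log\frac{C(u,u)}{u}=-\bigl[F_0(W(2\phi))-F_0(W(\phi))\bigr]\approx-\log 2\,\log\phi(u)\to-\infty.
\]

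\emph{Main obstacle.} The step I expect to be hardest is the second-order expansion of $J_p$ at $0$. This requires a rigorous expansion of $T_p(x)-x$ with an explicit positive constant and a controlled remainder, uniformly enough to invert $S_p$. I would derive it from the exact identity $F_p(T_p(x))-F_p(x)=1$, combined with the large-$s$ expansion $\cosh(pA(s))=\tfrac12[(2s)^p+(2s)^{-p}]+O(s^{p-2})$.
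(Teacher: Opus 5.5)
Your overall route is the same as the paper's: separate asymptotics of $G_\eta$ and $W_\theta$ at $0$ and $\infty$, then the diagonal quantities $\psi(2t)/\psi(t)$ and $2q(s)-q(2s)$ with $q=1-\psi$. However, the upper tail at the boundary point $\theta=0$ is not covered. You treat $-1\le\theta\le0$ uniformly via $p=-\theta$ and $1-\psi(t)=t-c_pt^{1+p}+O(t^2)$. But you derived $J_p(t)=t-c_pt^{1+p}(1+o(1))$ only for $0<p<1$, and at $p=0$ no expansion of that form exists. There $T_0(x)-x\approx 1/F_0'(x)=\Delta(x)/A(x)\sim x/\log(2x)$, so the correction to $J_0(t)$ is $-t/\log(2/t)$, not a constant times a power. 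Your leading term $(2^{1+p}-2)c_pu^{1+p}$ therefore collapses to zero.

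From $J_0(t)\sim t$ you do get $\lambda_U(0)=0$. But $\kappa_U(0)=1$ needs a lower bound $\widehat C(u,u)\ge u^{1+\epsilon}$, and that requires the logarithmic second-order term with a controlled remainder. The paper obtains it from the exact conjugation $A(T_0(x))=\sqrt{A(x)^2+2}$, which is just $F_0(T_0(x))=F_0(x)+1$ with $F_0=A^2/2$. With $r=A(x)$ this gives $S_0(x)/x=1+r^{-1}+\tfrac{1}{2}r^{-2}+O(r^{-3})$. Hence $W_0(s)=s[1-\ell^{-1}+\tfrac{1}{2}\ell^{-2}+O(\ell^{-3})]$ with $\ell=\log(2/s)$, and finally $2q(s)-q(2s)=2\log 2\,s\,\ell^{-2}(1+O(\ell^{-1}))$, i.e.\ $\widehat C(u,u)\sim 2\log 2\,u/\log^2(1/u)$. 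The precision matters: the surviving term is of order $s\ell^{-2}$, so a remainder of order $\ell^{-2}$ in $W_0(s)/s$ could cancel it. A single mean-value estimate of $T_0(x)-x$ gives only that weaker remainder.

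Two further points need repair.

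First, the lower-tail correspondence you list is misstated. If $\psi\in\RV_{-\alpha}$ at $\infty$, then directly $\lambda_L=\lim_{t\to\infty}\psi(2t)/\psi(t)=2^{-\alpha}$. The form $2^{-1/\alpha}$ applies when $\alpha$ is the index of $\phi$ at $0$. Used as you wrote it, with $\alpha=(1-\eta)/\eta$, it would give $2^{-\eta/(1-\eta)}$, contradicting the statement. The direct limit gives the claimed $2^{-(1-\eta)/\eta}$.

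Second, at $\theta=-1$ one has $J_1=\mathrm{id}$ exactly (since $d_1=1$), so the only quadratic term comes from $G_\eta$. Writing $1-G_\eta(w)=w-c_2(\eta)w^2+O(w^3)$, you get $2q(s)-q(2s)=2c_2(\eta)s^2+O(s^3)$. Appealing to independence at $\eta=-1$ does not show $c_2(\eta)\neq0$ for other $\eta$. You need one of the following:
\begin{itemize}
\item the explicit value $c_2(\eta)=\tfrac{1}{2}+\tfrac{\eta_+^2}{2(1-\eta)}+\tfrac{1-\eta^2}{6}$ computed in the paper; or
\item the remark that $c_2(\eta)=\tfrac{1}{2}G_\eta''(0)\ge\tfrac{1}{2}(G_\eta'(0))^2=\tfrac{1}{2}$, since $G_\eta$ is the Laplace transform of a distribution with mean $1$.
\end{itemize}
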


The proof of Proposition~\ref{prop:tail-map} is provided in Appendix~\ref{appendix:dependence}. Note that, the monotonicity of the tail dependence coefficients and tail orders with respect to the parameters $\theta$ and $\eta$ implies that increasing $\theta$ strengthens the tail dependence in the upper tail, while increasing $\eta$ strengthens the dependence in the lower tail.

\begin{rmk} \rm
The lower and upper tails are not symmetric even if they have the same strength of dependence in terms of tail order or tail dependence coefficient. This is not the limitation introduced by how the generator is constructed, but rather an inherent feature of Archimedean copulas. For Archimedean copulas, only bivariate Frank, the independence, and the countermonotonic copulas are reflection symmetric.
\end{rmk}

\begin{rmk} \rm
When $\eta = \theta = -1$, the copula reduces to the independence copula. Note that, in this case $F_1(s) = \cosh(\arcosh(1+s))-1=s$, $T_1(x) = x+1$, $d_1 = 1$, $S_1(x)=x$, and $J_1(t) = t$. Therefore, $\psi_{-1,-1}(t) = e^{-t}$, which is the generator of the independence copula.
\end{rmk}  

Figure~\ref{fig:contour} shows contour plots of the dependence structure of the FRA1 copula $C_{\eta,\theta}$, with different strengths of dependence in lower and upper tails, in terms of tail order ($\kappa$) and tail dependence coefficient ($\lambda$). Note that, unlike the GGEE copula \citep{Hua2017} and the PPPP \citep{Su2017b}, the FRA1 copula is not reflection symmetric (i.e., $C_{\eta,\theta}(u,v) \ne \widehat{C}_{\eta,\theta}(v,u)$), although it can capture full-range tail dependence in both lower and upper tails and the patterns between lower and upper tails are quite similar (see Figure~\ref{fig:contour}).

\begin{figure}[H]
\caption{Contour plot of the FRA1 copula with normal scores as margins.}
\label{fig:contour}
\includegraphics[width=\textwidth]{\detokenize{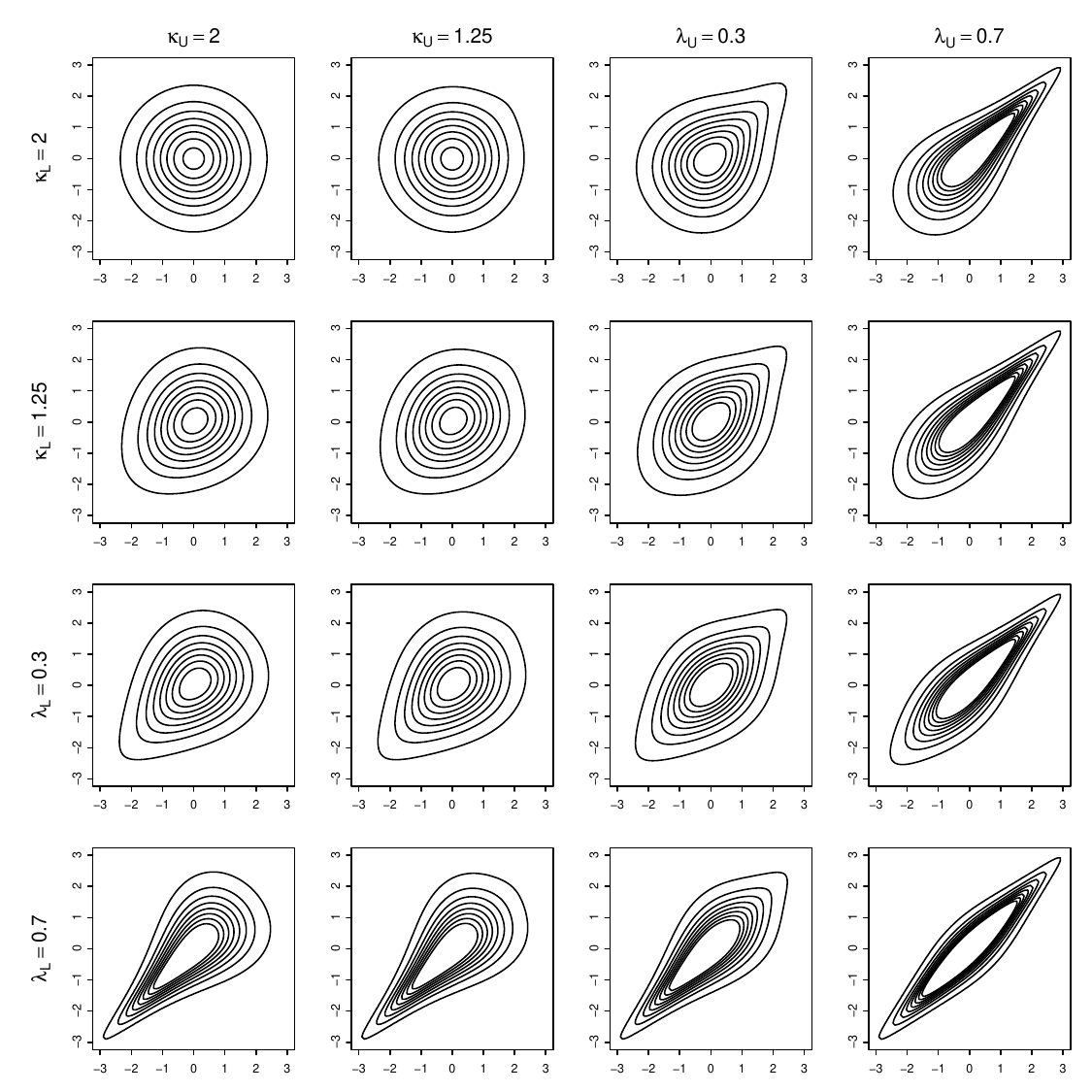}}
\end{figure}

\begin{prop}
\label{prop:Kendall-tau}
Let $C_{\eta,\theta}$ be the Archimedean copula defined by the generator $\psi_{\eta,\theta}$. Then the Kendall's tau of $C_{\eta,\theta}$ is given by
\begin{align}
\tau(C_{\eta,\theta}) =1-4\int_0^\infty t[\psi_{\eta,\theta}'(t)]^2 d t. \label{eq:Kendall-tau}
\end{align}
\end{prop}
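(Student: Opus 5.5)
This is the standard Kendall's tau formula for Archimedean copulas with a differentiable generator (Joe 1997; McNeil and Ne\v{s}lehov\'a 2009). The plan is to derive it from the general formula
\[
\tau = 4\,\E[C(U,V)]-1 = 1-4\int_0^1\!\!\int_0^1 \partial_u C\,\partial_v C\,du\,dv ,
\]
which holds for any absolutely continuous copula. The double integral will then be reduced to a single integral in the generator variable.

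\textbf{Step 1 (the Kendall distribution).} Write $\psi=\psi_{\eta,\theta}$ and $\phi=\psi^{-1}$. Let $(U,V)\sim C_{\eta,\theta}$ and $W=C(U,V)$. The standard Archimedean argument (Genest--Rivest) gives the Kendall distribution function
\[
K(w)=\P(W\le w)=w-\frac{\phi(w)}{\phi'(w)} .
\]
This uses that $\psi$ is $C^\infty$ and strictly decreasing on $(0,\infty)$ with $\psi'<0$, which holds by Proposition~\ref{prop:generator-properties}: a nonconstant completely monotone function has $\psi'<0$. Then
\[
\tau = 4\E W-1 = 4\int_0^1 (1-K(w))\,dw-1 = 1+4\int_0^1 \frac{\phi(w)}{\phi'(w)}\,dw .
\]

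\textbf{Step 2 (change of variables).} Substitute $w=\psi(t)$, so $dw=\psi'(t)\,dt$, $\phi(w)=t$ and $\phi'(w)=1/\psi'(t)$. The range $w\in(0,1)$ maps to $t\in(\infty,0)$. This gives
\[
\int_0^1 \frac{\phi(w)}{\phi'(w)}\,dw=\int_\infty^0 t\,\psi'(t)\,\psi'(t)\,dt=-\int_0^\infty t\,[\psi'(t)]^2\,dt ,
\]
which is Eq.~\eqref{eq:Kendall-tau}.

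\textbf{Main obstacle: justifying Step 1.} I would derive $K$ directly. Conditioning on $V=v$, the $h$-function is $\partial_v C(u,v)=\psi'(\phi(u)+\phi(v))\,\phi'(v)$. The event $\{C(U,v)\le w\}$ is the same as $\{U\le u_w(v)\}$, where $\phi(u_w(v))=\phi(w)-\phi(v)$ for $v>w$. Integrating over $v\in(w,1)$ and adding $\P(V\le w)=w$ gives
\[
K(w)=w+\int_w^1 \psi'(\phi(w))\,\phi'(v)\,dv=w-\psi'(\phi(w))\,\phi(w).
\]
This equals $w-\phi(w)/\phi'(w)$.

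The remaining technical points are the boundary behaviour and integrability. Since $\psi(t)=\int_t^\infty(-\psi')$ and $-\psi'$ is decreasing, we get $t\,|\psi'(t)|\to0$ as $t\to\infty$. As $t\to0$ we have $\phi(w)\to0$, and $|\psi'(0^+)|$ may be infinite (for example, in the upper-tail-dependent case $\theta>0$); this is harmless because the integrand is nonnegative, so monotone convergence applies. Finiteness of $\int_0^\infty t[\psi']^2\,dt$ follows since it equals $(1-\tau)/4\le 1/2$. Alternatively, one may simply cite the classical result for Archimedean generators, of which Eq.~\eqref{eq:Kendall-tau} is the instance with $\psi=\psi_{\eta,\theta}$.
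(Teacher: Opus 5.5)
Your proposal is correct and follows essentially the same route as the paper: the Kendall distribution $K(w)=w-\phi(w)/\phi'(w)$, then $\tau=4\E W-1=1+4\int_0^1\phi(w)/\phi'(w)\,dw$, then the substitution $w=\psi(t)$. The paper simply quotes $K$, whereas you also derive it from the $h$-function and check integrability; this is a sound but non-essential addition, and the $\partial_u C\,\partial_v C$ double-integral formula you state at the start is never actually used.
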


\begin{prop}
\label{prop:Spearman-rho}
Let $C_{\eta,\theta}$ be the Archimedean copula defined by the generator $\psi_{\eta,\theta}$. Then the Spearman's rho of $C_{\eta,\theta}$ is given by
\begin{align}
\rho_S(C_{\eta,\theta}) = 12 \int_0^\infty \int_0^\infty \psi_{\eta,\theta} (x+y) \psi_{\eta,\theta}'(x)
\psi_{\eta,\theta}'(y) d x d y - 3.  \label{eq:Spearman-rho}
\end{align}
\end{prop}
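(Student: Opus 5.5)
We will derive the formula from the general identity $\rho_S(C)=12\int_0^1\int_0^1 C(u,v)\,du\,dv-3$, which holds for any bivariate copula. The plan is to evaluate the double integral of $C_{\eta,\theta}(u,v)=\psi(\phi(u)+\phi(v))$ by a change of variables. Write $\psi=\psi_{\eta,\theta}$ and $\phi=\phi_{\eta,\theta}$. By Proposition~\ref{prop:generator-properties}, $\psi$ is continuous and strictly decreasing on $[0,\infty)$, with $\psi(0)=1$ and $\psi(\infty)=0$. It is also $C^\infty$ on $(0,\infty)$ because it is completely monotone. Hence $\phi:(0,1]\to[0,\infty)$ is a strictly decreasing bijection, and it is differentiable on $(0,1)$ wherever $\psi'\neq 0$.

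The key step is the substitution $u=\psi(x)$, $v=\psi(y)$ with $x,y\in(0,\infty)$. Then $du=\psi'(x)\,dx$, and the orientation reverses: $u\in(0,1)$ corresponds to $x$ running from $\infty$ down to $0$. The two sign reversals (one per variable) cancel, and we obtain
\begin{align}
\int_0^1\int_0^1 \psi(\phi(u)+\phi(v))\,du\,dv=\int_0^\infty\int_0^\infty \psi(x+y)\,\psi'(x)\,\psi'(y)\,dx\,dy. \notag
\end{align}
The integrand on the right is nonnegative, since $\psi'\le 0$ enters twice and $\psi\ge 0$. So Tonelli's theorem justifies the iterated change of variables without any integrability concerns. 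Multiplying by $12$ and subtracting $3$ gives \eqref{eq:Spearman-rho}.

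The only point that needs care is the validity of the one-dimensional substitution. $\psi$ is absolutely continuous on compact subintervals of $(0,\infty)$ and monotone, so the change-of-variables formula $\int_0^1 g(u)\,du=\int_0^\infty g(\psi(x))(-\psi'(x))\,dx$ holds for every nonnegative measurable $g$. It holds even if $\psi'$ vanishes on a null set, or blows up near $0$; the latter may happen because $W_\theta$ can have infinite slope at the origin. Apply it first in $v$ for fixed $u$, then in $u$. The behaviour at the endpoints is handled by monotone convergence over $[\epsilon,1/\epsilon]$. I expect no real obstacle beyond stating this measure-theoretic justification.
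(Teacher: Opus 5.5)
Your proof is correct and follows the same route as the paper: start from $\rho_S=12\int_0^1\int_0^1 C(u,v)\,du\,dv-3$ and substitute $u=\psi(x)$, $v=\psi(y)$. Your Tonelli and monotone-convergence justification, including the possible blow-up of $\psi'$ at $0$ when $0<\theta<1$, spells out details that the paper's two-line proof leaves implicit.
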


\begin{prop}
\label{prop:Blomqvist-beta}
Let $C_{\eta,\theta}$ be the Archimedean copula defined by the generator $\psi_{\eta,\theta}$. Then the Blomqvist's beta of $C_{\eta,\theta}$ is given by
\begin{align}
\beta_B(C_{\eta,\theta}) =4\psi_{\eta,\theta} (2\phi_{\eta,\theta}\left( 1/2 \right) )-1. \label{eq:Blomqvist-beta}
\end{align}  
\end{prop}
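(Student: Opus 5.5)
The formula follows from the definition of Blomqvist's beta together with the Archimedean form \eqref{eq:archimedean-copula} of $C_{\eta,\theta}$. For any bivariate copula $C$, Blomqvist's beta (the medial correlation) is
\[
\beta_B(C)=\P\bigl((U-1/2)(V-1/2)>0\bigr)-\P\bigl((U-1/2)(V-1/2)<0\bigr),
\]
where $(U,V)\sim C$. So the first step is to write $\beta_B$ in terms of $C(1/2,1/2)$ alone.

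\textbf{Reducing to the value at $(1/2,1/2)$.} Because $U$ and $V$ are uniform, we have $\P(U\le 1/2,V\le 1/2)=C(1/2,1/2)$. Inclusion--exclusion gives
\[
\P(U>1/2,V>1/2)=1-\tfrac12-\tfrac12+C(1/2,1/2)=C(1/2,1/2).
\]
Hence the concordant probability is $2C(1/2,1/2)$. The events $\{U=1/2\}$ and $\{V=1/2\}$ have probability zero, so the discordant probability is $1-2C(1/2,1/2)$. Subtracting yields the standard identity
\[
\beta_B(C)=4C(1/2,1/2)-1.
\]

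\textbf{Substituting the generator.} Next I plug in Eq.~\eqref{eq:archimedean-copula}, which gives $C_{\eta,\theta}(1/2,1/2)=\psi_{\eta,\theta}\bigl(\phi_{\eta,\theta}(1/2)+\phi_{\eta,\theta}(1/2)\bigr)=\psi_{\eta,\theta}\bigl(2\phi_{\eta,\theta}(1/2)\bigr)$. This step is legitimate because Proposition~\ref{prop:generator-properties} guarantees that $\psi_{\eta,\theta}$ is a valid generator: it is continuous and strictly decreasing from $1$ to $0$. Therefore $\phi_{\eta,\theta}(1/2)$ is a well-defined finite positive number, and $C_{\eta,\theta}$ is a genuine copula. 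Together with the previous step this gives Eq.~\eqref{eq:Blomqvist-beta}.

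\textbf{Closed form.} Finally, I note that $\phi_{\eta,\theta}=\psi_{\eta,\theta}^{-1}$ has the closed form in Eq.~\eqref{eq:psi-inverse}, so the expression is explicit. The only step needing care is this remark on well-definedness; otherwise the argument is routine.
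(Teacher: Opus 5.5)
Your proof is correct and takes the paper's route: substitute the Archimedean form into $\beta_B=4C(1/2,1/2)-1$. The only difference is that the paper uses $\beta_B=4C(1/2,1/2)-1$ directly as the definition, while you first derive it from the medial-correlation definition by inclusion--exclusion, which is a harmless, slightly more self-contained extra step.
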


The proofs of Propositions \ref{prop:Kendall-tau}, \ref{prop:Spearman-rho}, and \ref{prop:Blomqvist-beta} are provided in the Appendix \ref{appendix:additional-proofs}. Note that, only Blomqvist's beta has a closed-form expression, while Kendall's tau and Spearman's rho involve integrals. Figure \ref{fig:dependence} illustrates the dependence structure in terms of these common measures of dependence. Note that, while the exact numerical values may differ, the overall patterns of dependence are generally consistent across these measures. Also, the overall dependence measures may not be monotone in the parameters of the copula. For example, when $\eta$ takes a negative value, say, -0.1, Spearman's rho may initially increase with $\theta$ and then decrease, reflecting the non-monotone relationship. However, it does not mean that the copula is not identifiable, and it simply means that the copula is not identifiable by these overall dependence measures. The copula is identifiable through the two parameters, as they are designed particularly to capture dependence structures in the lower and upper tails, respectively, and strength of dependence in the tails is monotone in the parameters.

\begin{figure}[H]
\caption{Dependence structure illustration.}
\label{fig:dependence}
\includegraphics[width=\textwidth]{\detokenize{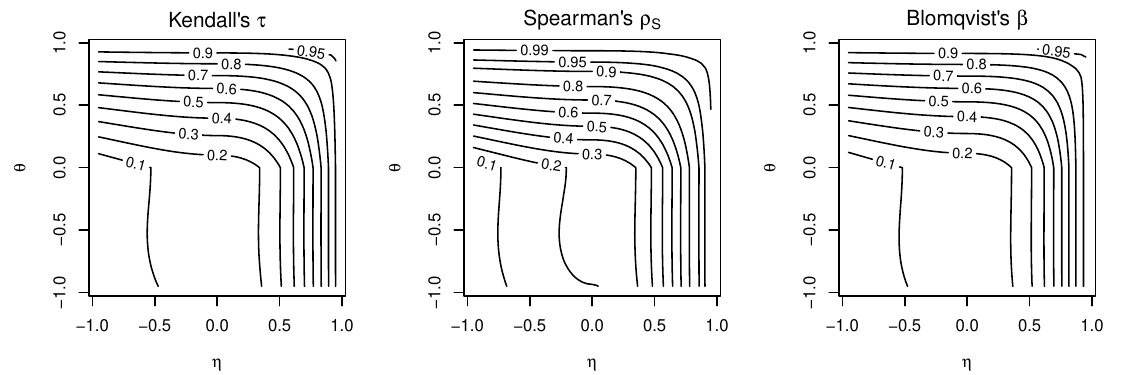}}
\end{figure}

\section{Maximum likelihood estimation}
\label{sec:MLE}
From the definition of $G_\eta(s)$ in \eqref{eq:G-definition} and $W_\theta(s)$ in \eqref{eq:W-definition}, we notice that the Archimedean generator is stitched together at $(0,0)$ from four pieces of $(\eta, \theta) \in [-1, 1)^2$, that is, $[-1,0]\times[-1,0]$, $[-1,0]\times(0,1)$, $(0,1)\times[-1,0]$, and $(0,1)\times(0,1)$. This has been designed purposely to allow the parameters change continuously inside the parameter space, and to describe continuous change of strength of  dependence captured by the copula. 

Figure \ref{fig:MLE} illustrates this structure, using the standardized $\log(\psi)$ curve: $\Delta \log(\psi) (\eta, \theta):= \log(\psi_{\eta, \theta}(5)) - \log(\psi_{0, 0}(5))$, which is a function of $\eta$ and $\theta$. At $\eta=0$ and $\theta=0$, these two functions are nonregular, in the sense that given $\eta=0$, $\Delta \log(\psi)$ is not differentiable at $\theta=0$ (Figure \ref{fig:MLE}.C), and given $\theta=0$, $\Delta \log(\psi)$ is flat at $\eta=0$ (Figure \ref{fig:MLE}.D). 

\begin{figure}[H]
\includegraphics[width=\textwidth]{\detokenize{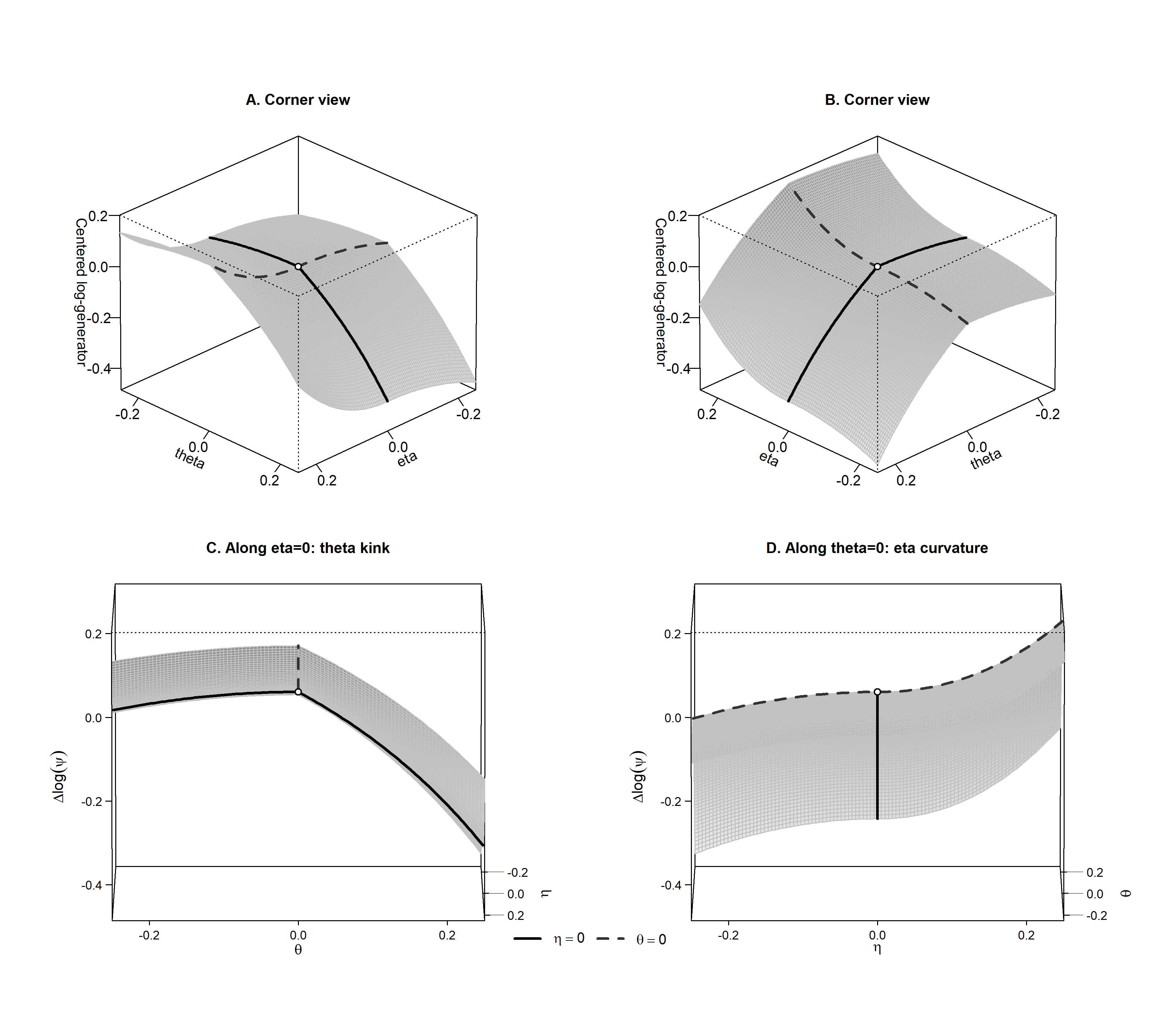}}
\caption{Illustration of $\Delta \log(\psi_{\eta, \theta}(t_0))$ as a function of $(\eta, \theta)$ from different viewpoints; $\eta, \theta \in [-0.25, 0.25], t_0=5$; scales are consistent across the plots. The solid lines represent the curve of $\Delta \log(\psi)$ when $\eta=0$, and the dashed lines represent the curve of $\Delta \log(\psi)$ when $\theta=0$.}
\label{fig:MLE}
\end{figure}

Because of the nonregularity, we cannot use standard optimization methods based on gradient or Hessian information on the whole parameter space. Instead, we can get the MLE on each of the four parameter regions first. Then the overall MLE can be obtained by comparing the likelihood values across these regions. All calculations involved are based on elementary closed-form formulas, and therefore, the speed for estimation is reasonably fast. When speed of estimation is critical, we can use a two-step procedure: first, obtain rough estimates on a coarse grid over the parameter space, and then refine the estimates using local optimization within the promising regions. There can be more sophisticated methods developed to further improve the efficiency and accuracy of the estimation process, but this is beyond the scope of the current paper and can be considered as future work.

\section{Full-range tail dependence for vine copulas}
\label{sec:vine}
The FRA1 copula has full-range tail dependence in both lower and upper tails, and more importantly, it has closed-form density functions, $h$-functions, etc., which makes it computationally convenient for many practical applications. 

Because of the full-range tail dependence and computational convenience, the FRA1 copula is particularly suitable for constructing vine copulas, as we can use only the FRA1 copula as the building block for all pair-copulas in the vine structure, avoiding the need to select different copula families for different edges. When negative dependence is needed, we simply use the $90$ degree rotation of the FRA1 copula. 

In this section, we conduct a simulation study to evaluate the performance of the FRA1 vine copula model compared to ordinary C-vine and R-vine models.

\subsection{Simulation setting}
We use the canonical vine (C-vine) structure for illustration in this paper, and compare it with ordinary C-vine and R-vine structures with ordinary parametric families implemented in the R package \texttt{rvinecopulib} \citep{Nagler2025a}. Table \ref{tab:models} summarizes the candidate models.

\begin{table}[H]
\centering
\caption{Candidate vine copula models.}
\label{tab:models}
\begin{tabular}{lp{0.8\textwidth}}
\toprule
Model & Specification \\
\midrule
FRA1 vine & The FRA1 copula with rotation 0 or 90 only; separate parameters for each edge \\
Ord. C-vine & Ordinary parametric families; same C-vine structure as the FRA1 vine model; edge families selected by BIC \\
Ord. R-vine &  Ordinary parametric families; R-vine structure and edge families selected from the data \\
Oracle  & True data-generating model; simulation-based noise-floor benchmark \\
\bottomrule
\end{tabular}
\end{table}
The ordinary C-vine is the primary comparison with the FRA1 vine because its structure is held equal to the FRA1 vine. The selected R-vine represents a practical ordinary-vine analysis in which structure selection is also allowed. The oracle is used as a benchmark for the best achievable performance.

We simulate four different scenarios for evaluating the performance of the FRA1 vine and competing vine copula models. These scenarios vary in terms of the underlying dependence structure, tail dependence patterns, and the presence of negative dependence. All scenarios have five dimensions, and for the C-vine, the root order is from 1 to 5, that is, the joint density is
\begin{align}
c(u_1, \dots, u_5) & = c_{12}(u_1, u_2) c_{13}(u_1, u_3) c_{14}(u_1, u_4) c_{15}(u_1, u_5) \notag \\
                  & \times c_{23|1}(u_2, u_3 \mid u_1) c_{24|1}(u_2, u_4 \mid u_1) c_{25|1}(u_2, u_5 \mid u_1) \notag \\
                  & \times c_{34|12}(u_3, u_4 \mid u_1, u_2) c_{35|12}(u_3, u_5 \mid u_1, u_2) \notag \\
                  & \times c_{45|123}(u_4, u_5 \mid u_1, u_2, u_3). \notag
\end{align}
Note that, with such a non-truncated C-vine, all pairs of variables are directly or indirectly connected through the vine structure, allowing for flexible modeling of dependencies between any two variables. The details of each scenario are listed in Table~\ref{tab:scenarios}.

\begin{table}[H]
\centering
\caption{Scenarios for simulation study.}
\label{tab:scenarios}
\begin{tabular}{lp{0.7\textwidth}}
\toprule
Scenario & Specification \\
\midrule
FR\_asymmetric & Use only unrotated FRA1 copula for C-vine, and $(\theta, \eta)$ alternates between $(0.55, -0.45)$ and $(-0.50, 0.45)$. \\
FR\_negative\_rotations & Use FRA1 copula for C-vine with every edge rotated by 90 degrees. The same alternating $(\theta, \eta)$ values as FR\_asymmetric are used. \\
ordinary\_mixed\_tail & Ordinary C-vine cycling through Clayton, Gumbel, Joe, Frank, Student-$t$(df=4), and BB1; all (conditional) Kendall taus are positive and decided by $0.46 - 0.075(t-1)$, where $t$ is the tree level. \\
ordinary\_mixed\_signed & The same ordinary mixed-family C-vine, with every third edge assigned a negative Kendall tau. Frank copula uses a signed parameter for negative dependence, while other negative edges use 90/270-degree rotations. \\
\bottomrule
\end{tabular}
\end{table}

\subsection{Evaluation criteria}
The following three goodness-of-fit criteria will be used to evaluate the fitted copula models: Kullback-Leibler (KL) gap from oracle, overall CDF root mean square error (RMSE), and all-corner tail log-ratio RMSE. These criteria capture different aspects of model fit, including divergence from the true copula, overall distributional accuracy, and capability in modeling tail behavior. All are loss measures, so smaller values are better.

\begin{itemize}
\item KL gap from oracle: It measures the divergence between the true copula and the fitted model. Let $c_0$ be the true copula density, $\widehat c_m$ a fitted candidate, and $\{\bm U_s^{(0)}:s=1,\ldots,N_T\}$ an independent test sample, where $N_T$ is the independent test sample size drawn from the true copula; $N_T = 5000$ is used throughout. Then the estimate of the KL gap for model $m$ is
\begin{equation}
\widehat\Delta_{\mathrm{KL},m} =\frac{1}{N_T}\sum_{s=1}^{N_T} \left[\log c_0\!\left(\bm U_s^{(0)}\right) -\log \widehat c_m\!\left(\bm U_s^{(0)}\right)\right]. \notag
\end{equation}
\item Overall CDF RMSE: It assesses the overall distributional accuracy of the fitted model. Let $\{\bm U_s^{(m)}:s=1,\ldots,N_M\}$ be an independent sample drawn from candidate $m$, where $N_M$ is the independent sample size for candidate $m$; $N_M = 5000$ is used throughout. At $K$ common points $\bm x_1,\ldots,\bm x_K$ sampled from the test data ($K=300$ used), define the empirical multivariate CDFs for the true copula and candidate $m$, respectively, as
\begin{align}
\widehat C_0(\bm x_k)
&=\frac{1}{N_T}\sum_{s=1}^{N_T}
 \mathds{1}\left(\bm U_s^{(0)}\leq\bm x_k\right), \quad \widehat C_m(\bm x_k) =\frac{1}{N_M}\sum_{s=1}^{N_M} \mathds{1}\!\left(\bm U_s^{(m)}\leq\bm x_k\right), \notag
\end{align}
where $\mathds{1}(\cdot)$ is the indicator function. Then, the estimated overall CDF RMSE for candidate $m$ is
\begin{align}
\widehat{\mathrm{RMSE}}_{C,m}
&=\left(\frac{1}{K}\sum_{k=1}^{K}
\left[\widehat C_m(\bm x_k)-\widehat C_0(\bm x_k)\right]^2 \right)^{1/2}. \notag
\end{align}
\item All-corner tail log-ratio RMSE: It evaluates the model's capability in capturing tail dependence. Let $\mathcal P=\{(i,j):1\leq i<j\leq5\}$, so $|\mathcal P|=10$. For threshold $q\in\mathcal Q$, where $\mathcal{Q} = \{0.05, 0.10, 0.15, 0.20\}$ is used, the four bivariate corner events are
\begin{align}
A_{ij}^{LL}(q)&=\{U_i\leq q,\ U_j\leq q\}, & A_{ij}^{UU}(q)&=\{U_i>1-q,\ U_j>1-q\}, \notag \\ 
A_{ij}^{LU}(q)&=\{U_i\leq q,\ U_j>1-q\}, & A_{ij}^{UL}(q)&=\{U_i>1-q,\ U_j\leq q\}. \notag
\end{align}
For source $a\in\{0,m\}$ and corner $h \in \mathcal{H} = \{\mathrm{LL},\mathrm{UU},\mathrm{LU},\mathrm{UL}\}$ defined by the threshold $q$, let $K_{a,ij}^{h}(q)$ be the observed event count in a sample of size $N_a$. Define the tail rate for corner $h$ as
\begin{align}
r_{a,ij}^{h}(q) =\frac{K_{a,ij}^{h}(q)+1/2}{(N_a+1)q}, \notag
\end{align}
where $1/2$ is used to prevent undefined log ratios in rare cells. Then the all-corner tail log-ratio RMSE for candidate $m$ is
\begin{align}
\widehat{\mathrm{RMSE}}_{\mathrm{tail},m} =\left(\frac{1}{4|\mathcal P||\mathcal Q|} \sum_{q\in\mathcal Q}\sum_{(i,j)\in\mathcal P} \sum_{h\in\mathcal H} \left[\log\left(\frac{r_{m,ij}^{h}(q)} {r_{0,ij}^{h}(q)}\right)\right]^2 \right)^{1/2}. \notag
\end{align}
\end{itemize}

\subsection{Performance}
The performance can be clearly compared using plots. Figures~\ref{fig:FR_asymmetric-n780}--\ref{fig:ordinary_mixed_tail-n780} display the above three goodness-of-fit criteria for the different candidate vine copulas under each scenario, where points are individual Monte Carlo replicates, and box plots summarize their distributions. For each box plot, there are 200 Monte Carlo replicates. The sample size for each scenario is $n=780$. We have also conducted the comparison with sample size $n=390$, and the results are qualitatively similar to those presented here, so are omitted for brevity.

In Figures \ref{fig:FR_asymmetric-n780} and \ref{fig:FR_negative_rotations-n780}, we can observe that the FRA1 vine model generally performs well across the three evaluation criteria. This is actually expected, as these two scenarios are based on the FRA1 vine copula model. This means, the FRA1 vine model can capture the underlying dependence structure accurately if it is correctly specified. The R-vine model is relatively worse because these scenarios are generated based on C-vine structures, while the R-vine model is more flexible but may not be as efficient in capturing the specific structure of the data.

In contrast, Figures \ref{fig:ordinary_mixed_signed-n780} and \ref{fig:ordinary_mixed_tail-n780} show scenarios where the FRA1 vine model may not have a clear advantage. These scenarios are generated from more complex dependence structures that are not specifically tailored to the FRA1 vine model. Again, R-vine models perform the worst as these two scenarios are based on C-vine structures, which the R-vine model may not capture efficiently.

In all the four scenarios, the FRA1 vine model shows comparable performance to the ordinary vine models with model selection and vine structure selection, not only in terms of the overall goodness-of-fit but also in the tail behavior, which is crucial for capturing various dependence patterns in the tail. 

These results are quite exciting as they demonstrate that vine models constructed by only the FRA1 copula can serve as an alternative to more complex vine structures, avoiding pair-copula selection and improving interpretability as the same copula is used consistently throughout the vine.

\begin{figure}[H]
\centering
\includegraphics[width=\textwidth]{\detokenize{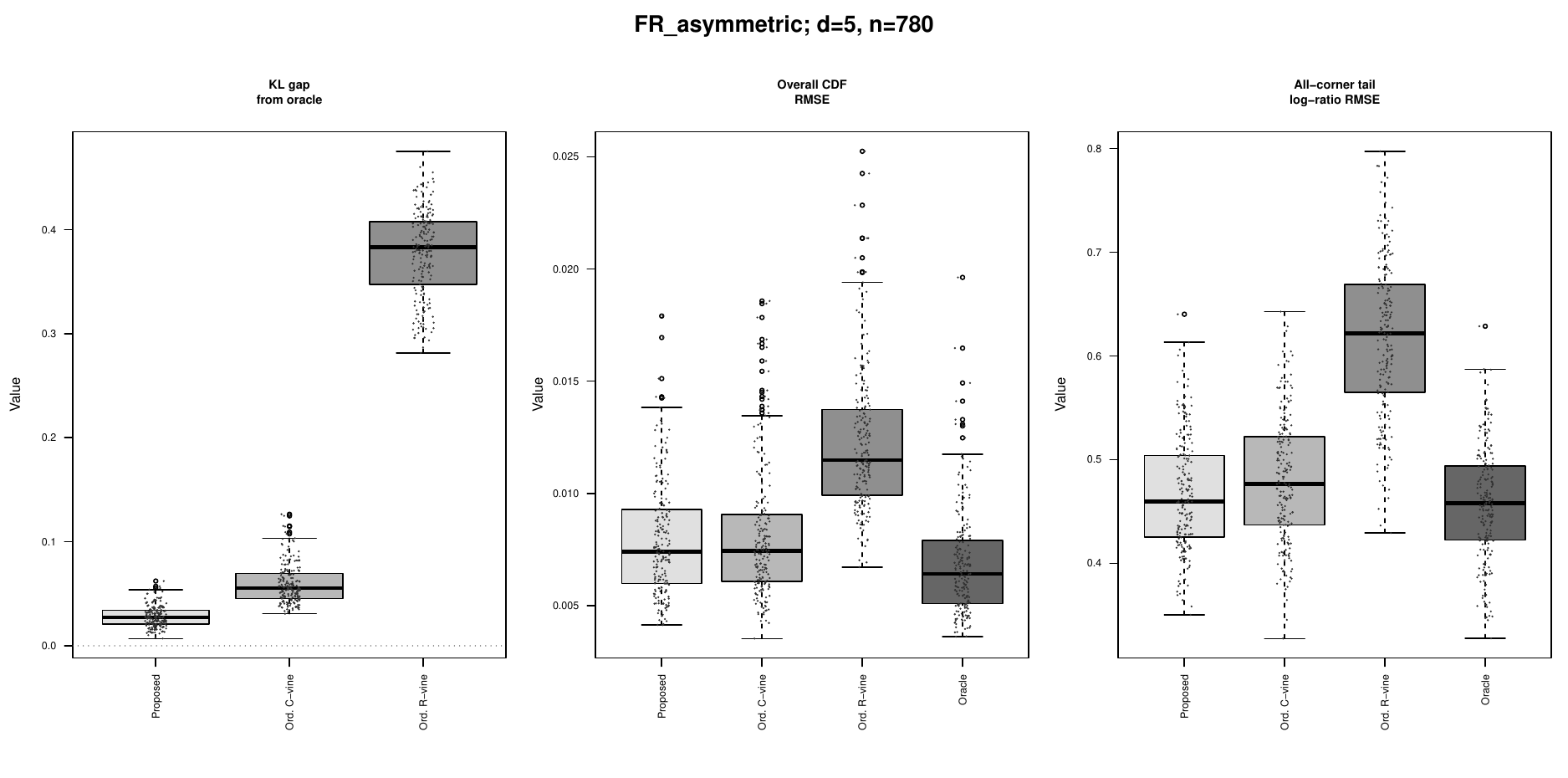}}
\caption{Three goodness-of-fit criteria for scenario FR\_asymmetric.}
\label{fig:FR_asymmetric-n780}
\end{figure}

\begin{figure}[H]
\centering
\includegraphics[width=\textwidth]{\detokenize{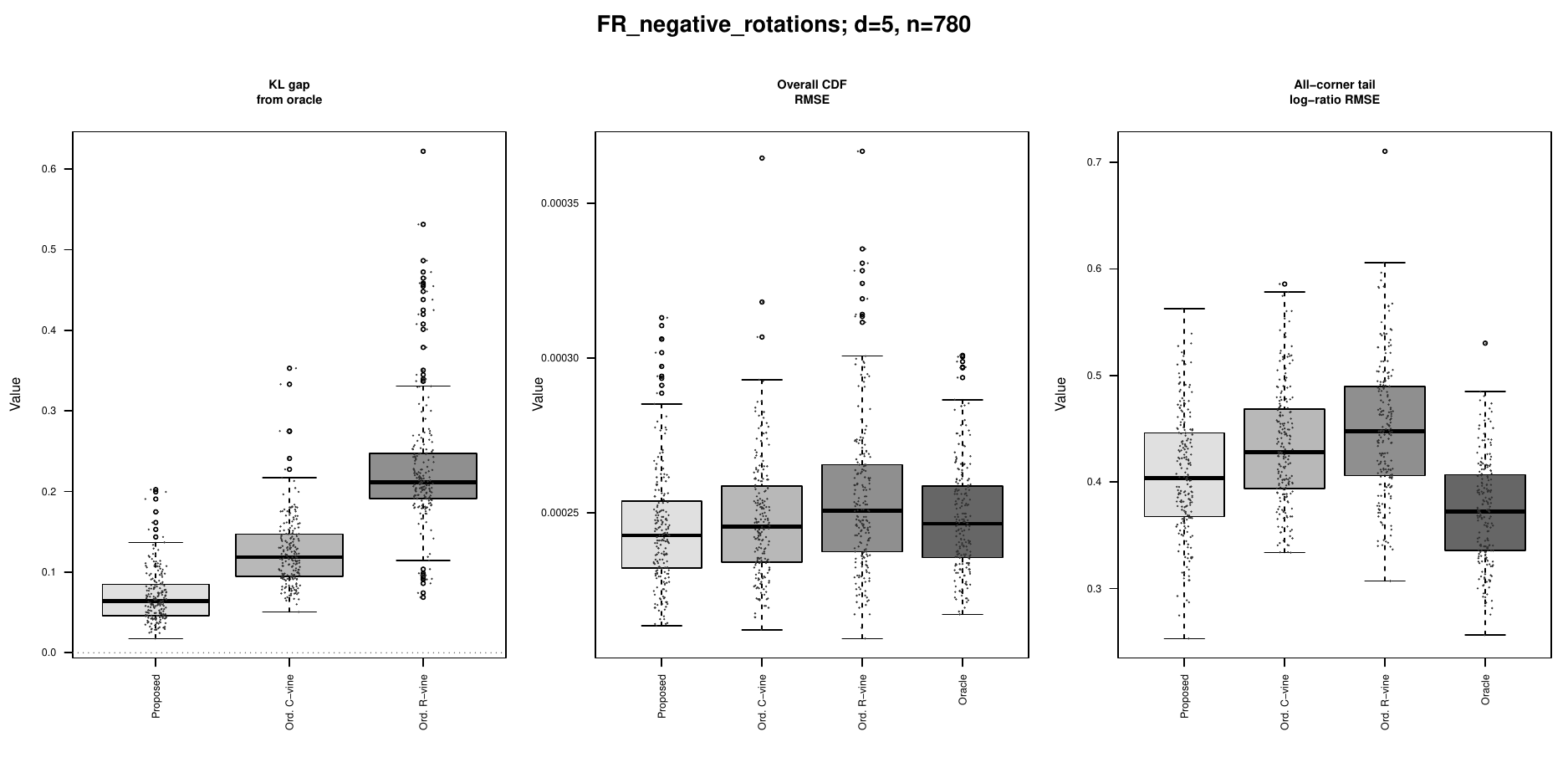}}
\caption{Three goodness-of-fit criteria for scenario FR\_negative\_rotations.}
\label{fig:FR_negative_rotations-n780}
\end{figure}

\begin{figure}[H]
\centering
\includegraphics[width=\textwidth]{\detokenize{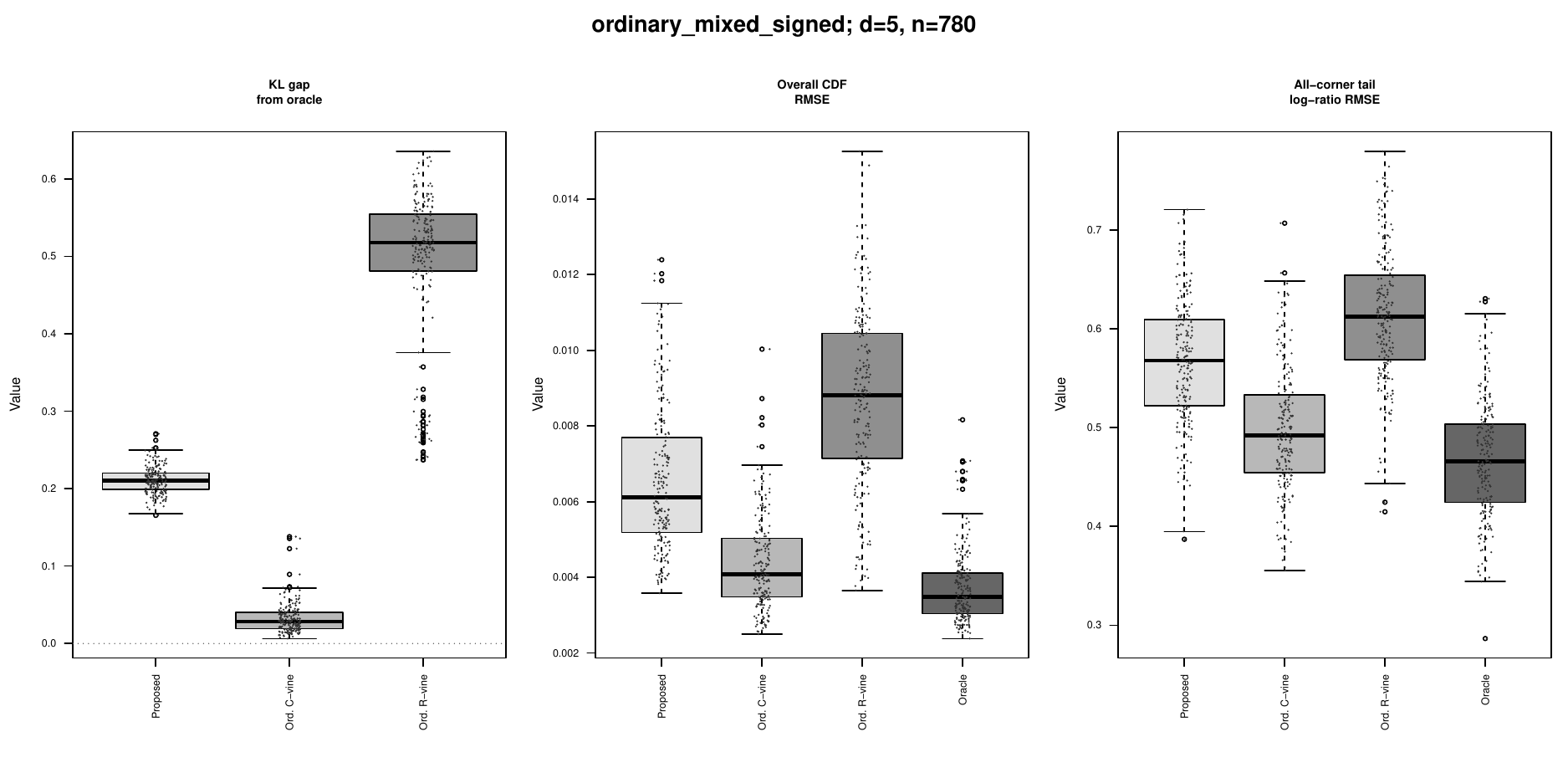}}
\caption{Three goodness-of-fit criteria for scenario ordinary\_mixed\_signed.}
\label{fig:ordinary_mixed_signed-n780}
\end{figure}

\begin{figure}[H]
\centering
\includegraphics[width=\textwidth]{\detokenize{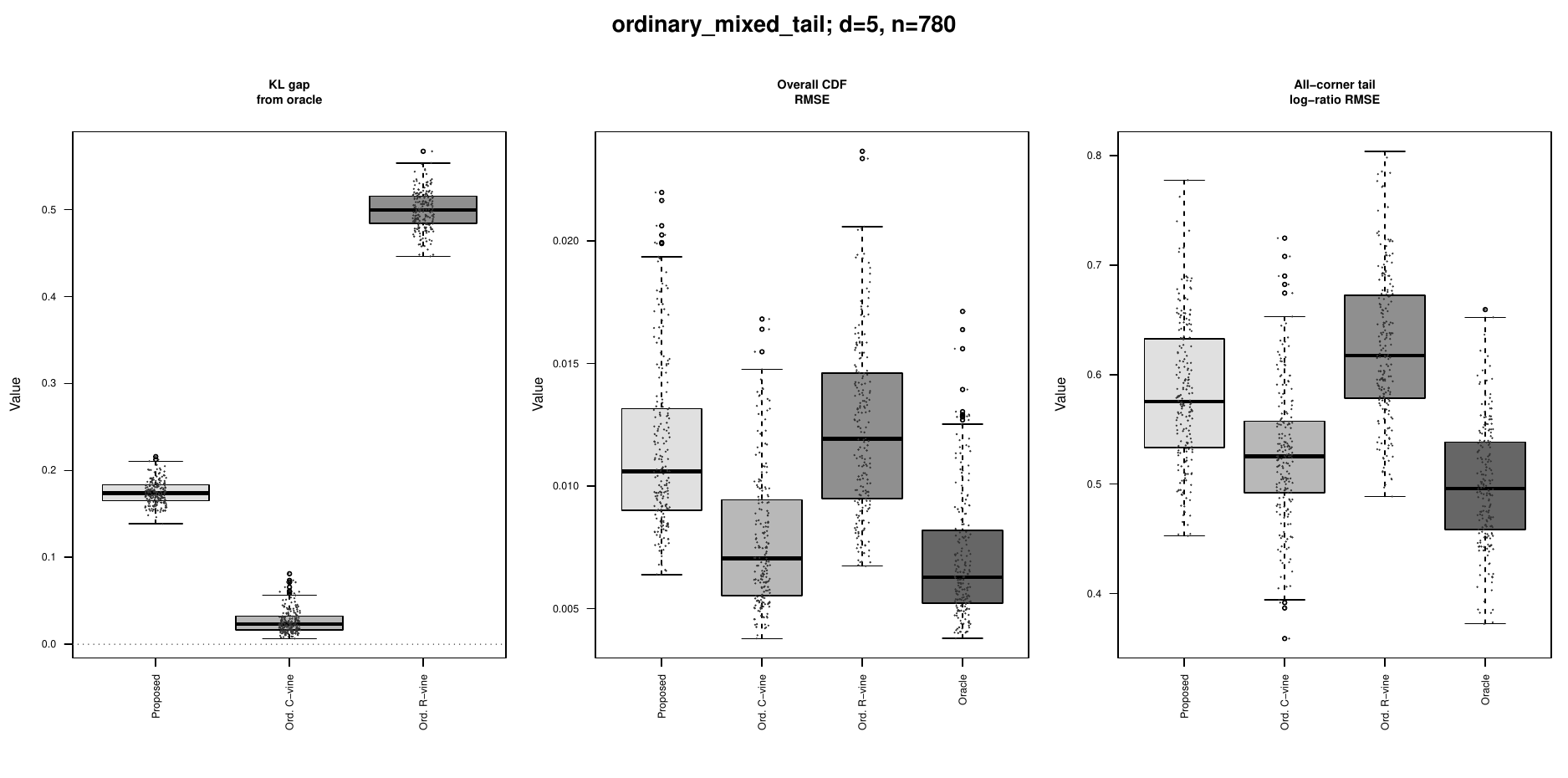}}
\caption{Three goodness-of-fit criteria for scenario ordinary\_mixed\_tail.}
\label{fig:ordinary_mixed_tail-n780}
\end{figure}

\section{Empirical study - medical expenditures}
\label{sec:empirical}
Understanding medical expenditures is crucial for evaluating healthcare costs, insurance planning, and policy decisions. In this empirical study, we analyze a dataset of Medical Expenditure Panel Survey (MEPS) \cite{AHRQ2026} to assess the dependence patterns of medical expenditures among frequent users of healthcare services. We use the 2023 and 2024 full year consolidated data for our analysis, and define frequent users as individuals with positive expenditures in both years in each of the following four categories:

\begin{center}
\begin{tabular}{ll}
\toprule
group & MEPS components \\
\midrule
\textsc{office}   & office-based visits (\texttt{OBV}) \\
\textsc{hospital} & outpatient $+$ emergency room $+$ inpatient
                    (\texttt{OPT}, \texttt{ERT}, \texttt{IPT}) \\
\textsc{rx}       & prescription medicines (\texttt{RX}) \\
\textsc{dental}   & dental visits (\texttt{DVT}) \\
\bottomrule
\end{tabular}
\end{center}

For each group, we use the total amount paid for that service by any source including out-of-pocket payments, private insurance, and public insurance programs.
Figure \ref{fig:pairs} shows the scatter plots of the eight variables on marginal normal scores, along with the median spending for each variable, Kendall's $\hat\tau$, and some crude tail estimates. There are noticeable differences among these dependence patterns: some pairs exhibit strong positive dependence, while others show weaker or negligible dependence; some pairs display asymmetry between the lower and upper tails, while others appear more symmetric; most pairs exhibit some degree of positive dependence, while dental spending appears relatively independent of the other categories and even a little negatively dependent on hospital spending.
Both temporal and cross-sectional dependence structures are evident in the scatter plots, with the four temporal panels being visibly different from the other 24; the strongest dependence is between \textsc{rx}\,'23 and \textsc{rx}\,'24, as indicated by the scatter plot in row 7, column 3.
These observations highlight the variability of dependence structures across different types of medical expenditures, emphasizing the need for flexible models that can capture both temporal and cross-sectional dependence, strong and/or weak dependence, as well as potential asymmetries in the tails.

\begin{figure}[htbp]
\centering
\includegraphics[width=\textwidth]{\detokenize{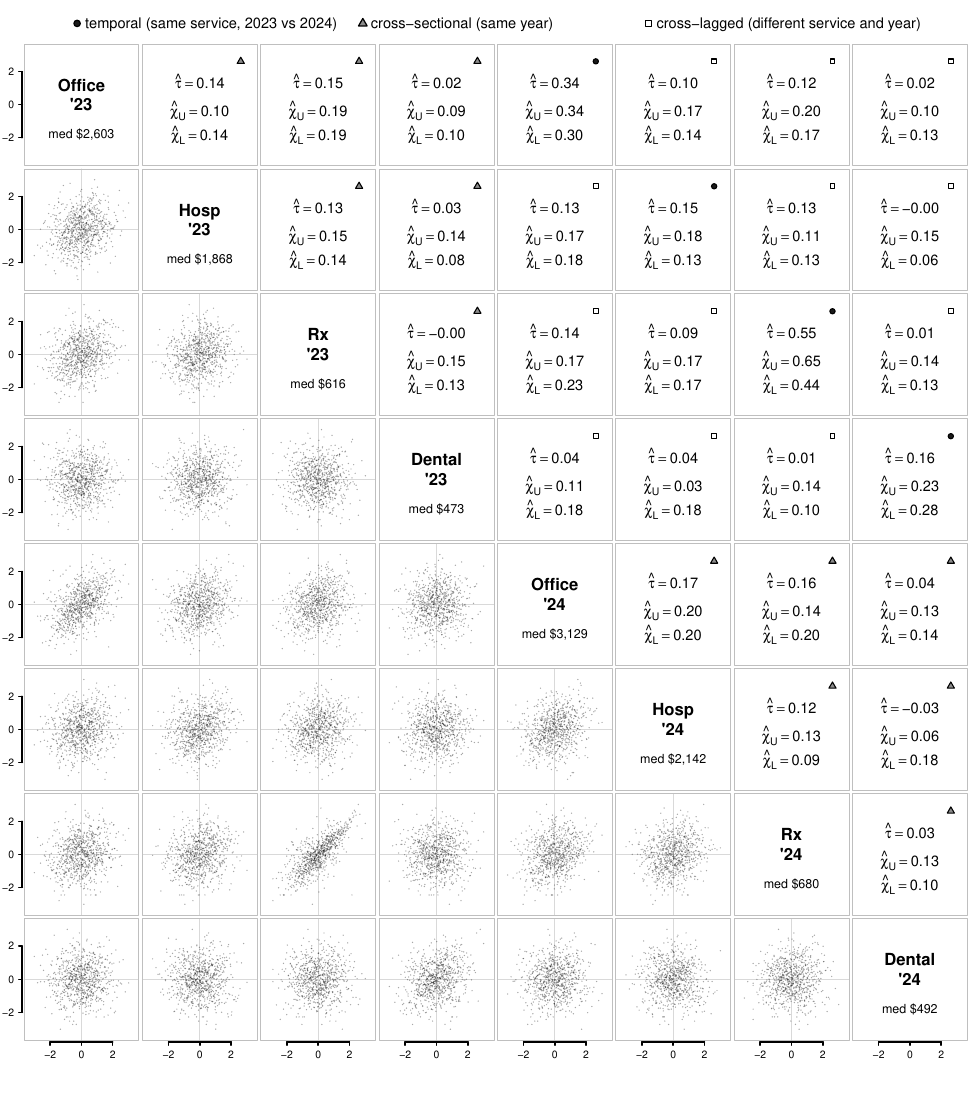}}
\caption{Scatter plots of the eight variables on marginal normal scores, as well as the median spending for each variable (med), estimated Kendall's $\hat\tau$, some crude tail estimates $\hat\chi_U = \widehat{\P}(U > 0.9, V > 0.9)/0.1$ and 
$\hat\chi_L = \widehat{\P}(U \le 0.1, V \le 0.1)/0.1$.}
\label{fig:pairs}
\end{figure}

We fit a C-vine with the following root order
\begin{align}
\textsc{rx'23} \rightarrow \textsc{office'23} \rightarrow
\textsc{hospital'23} \rightarrow \textsc{dental'23} \rightarrow  \notag \\
\textsc{rx'24} \rightarrow \textsc{office'24} \rightarrow
\textsc{hospital'24} \rightarrow \textsc{dental'24}, \notag
\end{align}
where the order reflects descending marginal dependence primarily and the chronological order secondarily.

There can be alternative root orders, but for this data the stronger dependence exists mainly in the temporal dependence within the same type of service over the two years, so temporal dependence conditional on those cross-sectional variables is not very different than that of the unconditional temporal dependence; see Table \ref{tab:temporal} for comparison of temporal dependence with and without conditioning on the cross-sectional variables. Therefore, the choice of the root order is not critical for capturing the main dependence patterns in the data.

There are 28 edges in total, and every two from the eight variables will be coupled once in the vine structure, either directly or conditionally. Each edge is modeled by the FRA1 copula or its $90^{\circ}$ rotation; no other copula family enters the model at any point. Estimation is performed by sequential maximum likelihood, meaning that parameters for each edge are estimated conditional on the previously estimated edges. The model has $56$ parameters and on average there are about 14 observations per parameter.

Figure~\ref{fig:map} plots the fitted $(\eta, \theta)$ of every edge. We use \rgm{+-} to indicate asymptotic dependence ($\theta>0$) in the upper tail and asymptotic independence ($\eta<0$) in the lower tail, and similarly for the other three quadrants: \rgm{-+}, \rgm{--}, and \rgm{++}. Only the seven tree-1 edges are unconditional, and the others are conditional pair copulas. The four temporal edges are labelled by service in the plot, while others are indicated using different symbols based on their type of dependence. For the four labelled temporal edges, only \textsc{Rx} is based on unconditional copula as the C-vine starts with \textsc{Rx'23}, and for all temporal dependence, only that between \textsc{Rx'23} and \textsc{Rx'24} is directly modeled in the root tree; the rest are the estimated parameters for the conditional copulas. However, the conditional copulas are quite close to their unconditional ones as the conditioning variables have much weaker dependence on the temporal edges and cannot absorb much of the temporal dependence; see Table~\ref{tab:temporal}.

From Figure~\ref{fig:map}, we find that there are no edges in the \rgm{-+} quadrant. The strongest dependence both in the lower and upper tails appears in the temporal dependence between \textsc{Rx} and between  \textsc{Office}, which may suggest that those who use prescription drugs regularly in one year are likely to do so in the next year as well, and similarly for office visits. However, hospital and dental visits exhibit weaker tail dependence overall, although dependence in the upper tail appears to be stronger than that in the lower tail. Most of the cross-sectional and cross-lagged edges fall into the \rgm{-}\rgm{-} quadrant, indicating weaker dependence in both tails. Only a few edges need a rotation of the copula to capture slightly negative dependence that may be associated with dental service usage.

\begin{figure}[H]
\centering
\includegraphics[width=\textwidth]{\detokenize{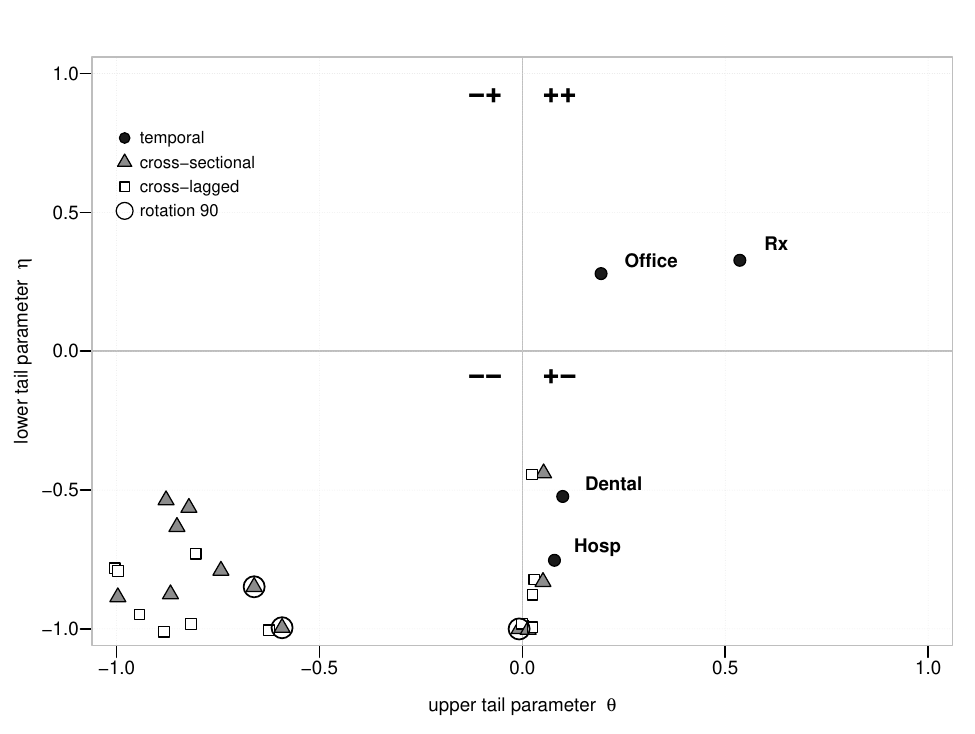}}
\caption{Fitted $(\eta,\theta)$ for all 28 edges of the FRA1 C-vine copula of the eight variables.}
\label{fig:map}
\end{figure}

Table~\ref{tab:temporal} shows the comparison of the four temporal edges, with and without conditioning on cross-sectional variables. The tail dependence regimes remain unchanged. The
largest change is \textsc{office}, whose $\lU$ goes from 0.261 conditional to
0.287 unconditional, and the reason is that, as a cross-sectional variable to \textsc{office}, \textsc{rx'23} has relatively strong dependence with both years' office spending, and conditioning on \textsc{rx'23} removes some of that dependence.

\begin{table}[H]
\centering
\caption{Estimates for the four temporal edges, with and without conditioning on cross-sectional variables.}
\label{tab:temporal}
\setlength{\tabcolsep}{5pt}
\begin{tabular}{cllrrcrrrr}
\toprule
tree & service & conditioned on & $\theta$ & $\eta$ & regime & $\lU$ & $\kU$ & $\lL$ & $\kL$ \\
\midrule
\multicolumn{10}{l}{\textit{As fitted in the vine}}\\
1 & \textsc{rx}       & ---                        &  0.546   &  0.333   & \rgm{++} & 0.630 & 1.00 & 0.250 & 1.00 \\
2 & \textsc{office}   & \textsc{rx'23}           &  0.202   &  0.271   & \rgm{++} & 0.261 & 1.00 & 0.155 & 1.00 \\
3 & \textsc{hospital} & $+$ \textsc{office'23}   &  0.078   & $-0.763$ & \rgm{+-} & 0.105 & 1.00 & 0     & 1.65 \\
4 & \textsc{dental}   & $+$ \textsc{hospital'23} &  0.104   & $-0.534$ & \rgm{+-} & 0.140 & 1.00 & 0     & 1.45 \\
\addlinespace
\multicolumn{10}{l}{\textit{Fitted unconditionally}}\\
  & \textsc{rx}       & ---                        &  0.546   &  0.333   & \rgm{++} & 0.630 & 1.00 & 0.250 & 1.00 \\
  & \textsc{office}   & ---                        &  0.223   &  0.300   & \rgm{++} & 0.287 & 1.00 & 0.198 & 1.00 \\
  & \textsc{hospital} & ---                        &  0.098   & $-0.728$ & \rgm{+-} & 0.132 & 1.00 & 0     & 1.66 \\
  & \textsc{dental}   & ---                        &  0.106   & $-0.543$ & \rgm{+-} & 0.141 & 1.00 & 0     & 1.46 \\
\bottomrule
\end{tabular}
\end{table}

With the fitted parameters in Figure~\ref{fig:map} and  Table~\ref{tab:temporal}, we can see that the FRA1 C-vine copula captures various dependence patterns over years and across different services effectively, and a single copula model is sufficient to describe the complex dependence structure in this data set and provides a direct comparison and interpretation of the dependence patterns.

\section{Concluding remarks}
\label{sec:conclude}
``All models are wrong, but some are useful.'' \citep{Box1976} Like any parametric copula model, the FRA1 copula is another candidate for capturing the complex dependence structure in multivariate data. What makes it appealing is its closed-forms and its ability to use only two parameters within a single model to capture both asymptotic dependence and asymptotic independence for both the lower and upper tails of a joint distribution. To the best of our knowledge, it is so far the only one that has a closed-form density function while having full-range tail dependence in both lower and upper tails. 

Because of these properties, many multivariate models can be built upon such a flexible copula and many sophisticated applications that demand a flexible dependence structure while requiring likelihood-based inference can benefit from it. With the simulation study, we use the FRA1 copula as the only building block for constructing C-vine copulas, while achieving comparable performance to the ordinary C-vine copulas constructed from selecting various parametric copulas. On the one hand, the FRA1 copula can serve as a candidate pair-copula to be selected by data for ordinary vine copulas, and on the other hand, the FRA1 copula itself can be used to construct a vine copula to avoid pair-copula model selection, be potentially more parsimonious, and more interpretable. With the empirical study, we demonstrate its practical usefulness in capturing various dependence patterns in temporal and cross-sectional dependence of a medical expenditure data set. Using only one flexible copula makes comparisons between different dependence patterns more straightforward and interpretable.

Limitations of the FRA1 copula include its nonregular likelihood function at $(0,0)$ and no perfect symmetry between the lower and upper tails. For the former, some optimized estimation techniques designed specifically for the FRA1 copula are required to further speed up the estimation process, while for the latter, potential extensions such as a mixture of the copula and its survival copula could be considered to achieve perfect reflection symmetry.

\appendix

\section{Proof of Proposition \ref{prop:generator-properties}}
\label{appendix:generator-properties}

To prove Proposition \ref{prop:generator-properties}, we first prove some lemmas, where properties of Bernstein functions, complete Bernstein functions, and Stieltjes functions play an important role in the proofs. We first give the definitions of these functions, and then list some known results taken from \cite{Schilling2012} that will be used throughout the proofs.

\begin{enumerate}
\item[(S1)] A function $f:(0,\oo)\to[0,\oo)$ belongs to $\St$, the set of \emph{Stieltjes functions}, if and only if it can be represented as $f(s)= a/s + b+\int_{(0,\oo)}(s+r)^{-1}\,\mu(d r)$ with $a,b\geq0$ and $\int_{(0,\oo)}(1+r)^{-1}\mu(d r)<\oo$ (Definition~2.1 of \cite{Schilling2012}).  In particular the constant function $1$ belongs to $\St$ ($a=0$, $b=1$, $\mu=0$).
\item[(S2)] $\St\subset\CM$: indeed $s\mapsto(s+r)^{-1} =\int_0^\oo e^{-sx}e^{-rx}d x$ is completely monotone for each $r\geq0$, and $\CM$ is a convex cone closed under positive mixtures (Corollary~1.6 of \cite{Schilling2012}).
\item[(S3)] A function $f:(0,\oo)\to [0, \oo)$ is a \emph{Bernstein function}, denoted as $f\in\BF$, if and only if $f\in C^\oo((0,\oo))$ and $f'\in\CM$ (Definition~3.1 of \cite{Schilling2012}).
\item[(S4)] A function $f:(0,\oo)\to [0, \oo)$ is a \emph{complete Bernstein function}, denoted as $f \in \CBF$, if it can be characterized by $f(t) =a+bt+ \int_{(0,\oo)}t/(t+r)\,\rho(d r), a,b\geq0$, with $\int_{(0,\oo)} (1+r)^{-1} \,\rho(d r) <\oo$ (Definition~6.1 of \cite{Schilling2012}).
\item[(S5)] For $f\not\equiv0$: $f\in\St$ if and only if $1/f\in\CBF$ (Theorem~7.3 of \cite{Schilling2012}).
\item[(S6)] $\CBF$ is a convex cone containing the nonnegative constants and the identity $\mathrm{id}:s\mapsto s$; it is closed under positive mixtures, under pointwise limits (Corollary~7.6 of \cite{Schilling2012}) and under composition (Corollary~7.9 of \cite{Schilling2012}).
\item[(S7)] Every $f\in\BF\supset\CBF$ is nonnegative, nondecreasing, concave and $C^\oo$ on $(0,\oo)$. (Definitions~3.1 and 6.1 of \cite{Schilling2012})
\item[(S8)] If $g\in\CM$ and $f\in\BF$, then $g\circ f\in\CM$ (Theorem~3.7 of \cite{Schilling2012}).
\end{enumerate}

\bigskip

\begin{lemma}
\label{lem:F-prime-Stieltjes}
For any $0\leq p\leq1$, $F_p' \in \St$, $F_p\in\BF$, and $q_b \in \CBF$ with 
\begin{align}
q_p(s):=\frac{1}{F_p'(s)}, \quad s > 0. \notag
\end{align}
\end{lemma}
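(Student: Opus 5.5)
The plan is to compute $F_p'$ explicitly, show it is a Stieltjes function by an analytic-continuation argument, and then read off the other two claims from (S2), (S3) and (S5). Since $A(s)=\arcosh(1+s)$ gives $A'(s)=1/\sqrt{s(s+2)}=1/\sinh(A(s))$, the chain rule yields
\begin{align}
F_p'(s)=\frac{\sinh(pA(s))}{p\,\sinh(A(s))},\quad 0<p\le 1, \qquad F_0'(s)=\frac{A(s)}{\sinh(A(s))}. \notag
\end{align}
In particular $F_1'\equiv 1\in\St$ by (S1).

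Once $F_p'\in\St$ is known, the remaining claims are short. We have $F_p(0)=0$ and $F_p'\ge0$, so $F_p\ge0$ on $(0,\oo)$. By (S2), $F_p'\in\CM$, and then (S3) gives $F_p\in\BF$. Finally $F_p'\not\equiv0$, so (S5) gives $q_p=1/F_p'\in\CBF$.

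The main work is $F_p'\in\St$ for $0\le p<1$. I would use the Nevanlinna/Pick characterization of Stieltjes functions in \cite{Schilling2012} (Chapter 7). It says $f\in\St$ iff $f\ge0$ on $(0,\oo)$, $f$ extends holomorphically to $\mathbb{C}\setminus(-\oo,0]$, and $\operatorname{Im} z\cdot\operatorname{Im} f(z)\le 0$. Equivalently, one can produce $\mu$ directly by Stieltjes inversion, $\mu(dr)=-\pi^{-1}\operatorname{Im} f(-r+i0)\,dr$, and check the representation in (S1) via Cauchy's formula on a keyhole contour.

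Write $w=1+z$, so $A=\arcosh w$ is holomorphic off $w\in(-\oo,1]$, i.e. off $z\in(-\oo,0]$. The function $g(a)=\sinh(pa)/(p\sinh a)$ is even in $a$. Hence $f=g\circ A$ has no jump across $w\in(-1,1)$, i.e. across $z\in(-2,0)$: there $A=\pm i\alpha$ with $\alpha=\arccos w\in(0,\pi)$, and $f=\sin(p\alpha)/(p\sin\alpha)$ is real and the same on both sides. The limit at $z\to0$ is $1$, so there is no atom at $r=0$ and $a=0$ in (S1).

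On $z=-r$ with $r>2$, from the upper side we have $A=c+i\pi$, where $c=\arcosh(r-1)>0$. Then
\begin{align}
f(-r+i0)=-\frac{\sinh(pc)\cos(p\pi)+i\cosh(pc)\sin(p\pi)}{p\sinh c}, \notag
\end{align}
so $\operatorname{Im} f(-r+i0)=-\cosh(pc)\sin(p\pi)/(p\sinh c)\le 0$. This gives a nonnegative density for $\mu$ on $(2,\oo)$.

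The delicate points, which I expect to be the main obstacle, are the following.
\begin{itemize}
\item Near $r=2$ (i.e. $w=-1$, $\alpha\to\pi$), $\sinh A\sim\sqrt{2(w+1)}$ vanishes, so $f$ has a singularity of order $|z+2|^{-1/2}$. It is integrable, so there is no atom and no contribution from the small circle around $z=-2$.
\item At infinity, $|f(z)|=O(|z|^{p-1})\to0$. So $b=0$, the large-circle contribution vanishes, and $\int(1+r)^{-1}\mu(dr)<\oo$ because the density is $O(r^{p-1})$.
\end{itemize}
With these estimates, Cauchy's formula gives $f(s)=\int_{(2,\oo)}(s+r)^{-1}\mu(dr)$, so $F_p'\in\St$.

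For $p=0$ I would either repeat the computation, where the density becomes $\pi/\sinh c$ since $\sin(p\pi)/p\to\pi$, or pass to the limit $p\downarrow0$. In the second route, $q_p\to q_0$ pointwise, so $q_0\in\CBF$ by (S6), and then $F_0'\in\St$ by (S5).
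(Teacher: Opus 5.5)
Your proposal is correct, but for the key step $F_p'\in\St$ it takes a different route from the paper. The paper does no complex analysis. It quotes the tabulated integral (Formula 2.4.6.27 of \cite{Prudnikov1986}) $\frac{\sinh(p\beta)}{p\sinh\beta}=\frac{\sin(\pi p)}{\pi p}\int_0^\infty\frac{\cosh(pw)}{\cosh\beta+\cosh w}\,dw$ and sets $\beta=A(s)$, so that $\cosh\beta=1+s$. The substitution $r=1+\cosh w$ then turns the right-hand side directly into $\int_{(2,\infty)}(s+r)^{-1}\mu_p(dr)$, with density $\frac{\sin(\pi p)}{\pi p}\frac{\cosh(pA(r-2))}{\Delta(r-2)}$. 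The integrability condition in (S1) is just $F_p'(1)<\infty$, and $p=0$ is obtained by letting $p\downarrow0$ in the identity.

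You instead recover the same measure by Stieltjes inversion. With $c=\arcosh(r-1)=A(r-2)$ and $\sinh c=\Delta(r-2)$, your $-\pi^{-1}\operatorname{Im}F_p'(-r+i0)$ is exactly the paper's density. Your route is self-contained and explains why the support is $(2,\infty)$: the evenness of $a\mapsto\sinh(pa)/(p\sinh a)$ removes the apparent cut on $(-2,0)$. In effect you prove the table formula.

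The price is the analytic bookkeeping, and all of it holds:
\begin{itemize}
\item holomorphy of $f$ on $\mathbb{C}\setminus(-\infty,-2]$;
\item the reflection $f(\bar z)=\overline{f(z)}$, which turns the jump into $2i\operatorname{Im}f(-r+i0)$;
\item the integrable $|z+2|^{-1/2}$ singularity;
\item a decay bound $|f(z)|=O(|z|^{p-1})$ that must be uniform in $\arg z$. It is, since $A(z)=\log(2z)+O(|z|^{-1})$ on the slit plane, $|\sinh(pA)|\le e^{p\operatorname{Re}A}$, and $|\sinh A|\ge\sinh(\operatorname{Re}A)$.
\end{itemize}

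Note that the argument you actually carry out is the keyhole-contour one, not the Nevanlinna--Pick criterion. The criterion would need $\operatorname{Im}f\le0$ on the whole upper half-plane, not just on the boundary.

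For $p=0$, your second option is cleaner than the paper's limiting identity: $q_p\to q_0$ pointwise, closure of $\CBF$ under pointwise limits (S6) gives $q_0\in\CBF$, and (S5) finishes. One small slip: for $p=0$ the density of $\mu_0$ is $1/\sinh c$, not $\pi/\sinh c$. The latter is $-\operatorname{Im}F_0'(-r+i0)$ before dividing by $\pi$. This affects neither positivity nor integrability.
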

 
\pf The condition $A(s)=\arcosh(1+s)$ implies that for $s>0$
\begin{align}
\cosh(A(s))=1+s,\quad \sinh(A(s))=\sqrt{\cosh^2(A(s))-1}=\sqrt{s(s+2)}=:\Delta(s)>0. \label{eq:cosh-sinh-A}
\end{align}
Therefore,
$A'(s)\sinh(A(s))=1$, and thus,
\begin{align}
A'(s)=\frac{1}{\sinh(A(s))}=\frac{1}{\Delta(s)}, \notag
\end{align}
together with the chain rule, implies that
\begin{align}
F_p'(s)=\frac1{p^2}\,p\sinh(pA(s))\,A'(s)  = \frac{\sinh(pA(s))}{p\,\sinh(A(s))}, \quad 0\leq p\leq 1, \notag
\end{align}
that is, $F_p$ is $C^\oo$ on $(0,\oo)$.
With $\lim_{p\downarrow0} \sinh(p\beta)/p=\beta$ and $\lim_{p\downarrow0} \sin(\pi p)/(\pi p)=1$, we have $F_0'(s)=A(s)/\Delta(s)$. Since $A(s)>0$,
\begin{align}
F_p'(s)>0, \quad \forall s>0,\ 0\leq p\leq 1,\label{eq:F-prime-positive}
\end{align}
and clearly, $F_p\geq0$ with $F_p(0+)=0$. For $p=1$, Eq.~\eqref{eq:cosh-sinh-A} leads to $F_1(s)=\cosh(A(s))-1=s$ and $F_1'\equiv1$.
 
Taking $b=p, c=1/2$, and $t=\beta/2$ in Formula 2.4.6.27 of \cite{Prudnikov1986} implies that for any $\beta>0$ and $0 < p<1$,
\begin{align}
\frac{\sinh(p\beta)}{p\sinh(\beta)} =\frac{\sin(\pi p)}{\pi p} \int_0^\oo\frac{\cosh(pw)}{\cosh(\beta)+\cosh(w)}\, d w. \label{eq:key-integral}
\end{align}
The above equality for the limiting case $p \ra 0$ can be obtained similarly. Hence, Eq.~\eqref{eq:key-integral} holds for all $0 \leq p < 1$. Then, we fix $s>0$ and take $\beta=A(s)$ in Eq.~\eqref{eq:key-integral}, so that $\cosh(\beta)=1+s$ and $\sinh\beta=\Delta(s)$ by Eq.~\eqref{eq:cosh-sinh-A}.  Letting $r=1+\cosh(w)$, a bijection from $(0,\oo)$ onto $(2,\oo)$. Then changes of variables give that, for any $0\leq p<1$ and $s>0$,
\begin{align}
F_p'(s) =\int_{(0,\oo)}\frac{1}{s+r}\,\mu_p(d r), \quad \mu_p(d r)=\frac{\sin(\pi p)}{\pi p}\, \frac{\cosh(pA(r-2))}{\Delta(r-2)}\,\mathbf \mathds{1}_{(2,\oo)}(r) d r, \label{eq:F-prime-Stieltjes-form}
\end{align}
where $\mu_p$ is a positive measure on $(0,\oo)$. Taking $s=1$ in Eq.~\eqref{eq:F-prime-Stieltjes-form},
\begin{align}
\int_{(0,\oo)}\frac{\mu_p(d r)}{1+r}=F_p'(1)<\oo. \notag
\end{align}
That is, (S1) holds, and therefore, $F_p'\in\St$ (with $a=b=0$) for every $0\leq p<1$.
As $F_1'\equiv1$, which trivially belongs to $\St$, we have
\begin{align}
F_p'\in\St\quad\text{for every }0\leq p\leq 1. \notag
\end{align}
By (S2), $F_p'\in\CM$. Also, $F_p \geq 0$ and $F_p\in C^\oo((0,\oo))$, so (S3) implies that
$F_p\in\BF$.

Finally, $F_p'>0$ on $(0,\oo)$ by Eq.~\eqref{eq:F-prime-positive}, so $F_p'$ is a nonzero Stieltjes function, and then (S5) completes the proof. \QED

\begin{lemma}
\label{lem:flow}
Suppose that for any $x >0$, the map $u \mapsto \Theta_u(x) \in (0, \oo)$ is differentiable on $[0,z]$, $z >0$, and 
\begin{align}
\frac{\partial}{\partial u}\Theta_u(x)=q(\Theta_u(x)), \quad \Theta_0(x)=x,\label{eq:flow-ode}
\end{align}
where $q\in\CBF$. Then $\Theta_z\in\CBF$.
\end{lemma}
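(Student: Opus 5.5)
The plan is to realize $\Theta_z$ as a pointwise limit of explicit Euler approximations of the flow in Eq.~\eqref{eq:flow-ode}. Each approximation will be a complete Bernstein function, so the closure properties in (S6) transfer the property to the limit. For $n\ge1$ put $h=z/n$ and define the one-step map $E_h(x)=x+h\,q(x)$. Since $q\in\CBF$ and $h>0$, and $\CBF$ is a convex cone containing $\mathrm{id}$ by (S6), we get $E_h\in\CBF$. Closure of $\CBF$ under composition (again (S6)) then gives
\begin{align}
\Theta^{(n)}_z:=\underbrace{E_h\circ\cdots\circ E_h}_{n\text{ times}}\in\CBF. \notag
\end{align}
It therefore suffices to show $\Theta^{(n)}_z(x)\to\Theta_z(x)$ for every fixed $x>0$, and then to invoke closure of $\CBF$ under pointwise limits.

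For the convergence I will use a standard Euler error estimate, localized so that only one-sided bounds on $q$ are needed. Fix $x>0$. By (S7), $q\ge0$, $q$ is nondecreasing, and $q$ is concave and $C^\infty$ on $(0,\infty)$.

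\emph{Confinement.} Both the exact trajectory $u\mapsto\Theta_u(x)$ and the Euler iterates $x_k:=E_h^{\circ k}(x)$ are nondecreasing, hence stay in $[x,\infty)$. On $[x,\infty)$, concavity and monotonicity give $0\le q'\le q'(x)=:L<\infty$. So $q$ is Lipschitz there with constant $L$.

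\emph{Upper bound on the range.} The trajectory stays bounded: $\Theta_u(x)\le x+u\,q(\Theta_u(x))$, and $q$ grows at most linearly, since $q(y)\le q(x)+L(y-x)$. Gronwall then bounds $\Theta_u(x)\le M$ for $u\in[0,z]$, with $M$ independent of $n$. The same linear-growth bound applied to the iterates gives $x_k\le M'$ uniformly in $n$ and $k\le n$.

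\emph{Local error.} On the compact interval $[x,\max(M,M')]$, $q$ is smooth, so $\Theta''_u=q'(\Theta_u)\,q(\Theta_u)$ is bounded by some $B$. Writing $e_k=|x_k-\Theta_{kh}(x)|$, Taylor's theorem and the Lipschitz bound give $e_{k+1}\le(1+Lh)e_k+Bh^2/2$. Hence $e_n\le \frac{B}{2L}(e^{Lz}-1)h\to0$ (or $e_n\le Bzh/2$ if $L=0$).

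This proves pointwise convergence on $(0,\infty)$, and (S6) yields $\Theta_z\in\CBF$. The hypothesis that $u\mapsto\Theta_u(x)$ is differentiable and solves Eq.~\eqref{eq:flow-ode} identifies it with this unique solution, since $q$ is locally Lipschitz on $(0,\infty)$.

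The main obstacle is the convergence step. $q$ need not be globally Lipschitz near $0$, since $q'(0+)$ may be infinite, so the usual Euler theorem does not apply verbatim. The remedy is the confinement argument: because $q\ge0$, the dynamics only move to the right of the starting point $x$, where $q'$ is bounded by $q'(x)$. One should also check that the limit is taken in the sense required by (S6), namely pointwise convergence of $\CBF$ functions to a finite function on $(0,\infty)$. That holds here because $\Theta_z(x)\in(0,\infty)$ for each $x>0$.
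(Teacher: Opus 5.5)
Your proposal is correct and follows essentially the same route as the paper. Both write $\Theta_z$ as the pointwise limit of $n$-fold compositions of the Euler step $\mathrm{id}+hq\in\CBF$, control the error by a recursion $e_{k+1}\le(1+h\Lambda)e_k+O(h^2)$, and conclude by closure of $\CBF$ under pointwise limits; the only difference is how the iterates are confined (the paper uses the monotone comparison $y_k\le\Theta_{kh}(x)$ to keep them in $[x,\Theta_z(x)]$ with $\Lambda_x=\max q'$ there, whereas you take the Lipschitz constant $q'(x)$ from concavity and bound the range by a Gronwall-type estimate).
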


\pf
By (S7), $q \in C^\oo((0,\oo))$, nonnegative, and nondecreasing. For a given $x>0$, let $W(u):=\Theta_u(x)$, then $q(W(u))$ is continuous and nonnegative, so $W$ is differentiable and nondecreasing, and
\begin{align}
0<x=W(0)\leq W(u)\leq W(z)=\Theta_z(x)=:M_x, \quad u\in[0,z]. \notag
\end{align}
Therefore, on the compact interval $I_x=[x,M_x]\subset(0,\oo)$, the following functions
\begin{align}
Q_x:=\max_{I_x}q, \quad \Lambda_x:=\max_{I_x}q' \notag
\end{align}
are finite.

\medskip
Let $n\in\mathbb N$, $h=z/n$, and 
\begin{align}
\mathcal E_h=\mathrm{id}+h\,q, \quad E_n=\underbrace{\mathcal E_h\circ\cdots\circ\mathcal E_h}_{n\ \text{times}}, \notag
\end{align}
where $\mathrm{id}$ is the identity map. By (S6) that $\CBF$ is a convex cone containing $\mathrm{id}$, we have $\mathcal E_h\in\CBF$. Also, $\mathcal E_h(y)\geq y>0$, so $\mathcal E_h$ maps $(0,\oo)$ into itself and is well defined. Because $\CBF$ is closed under composition (S6), $E_n\in\CBF$.

\medskip

Let $y_0=x$ and $y_{k+1}=\mathcal E_h(y_k)$, so that $E_n(x)=y_n$. Then $y_k\geq x$ because $q\geq0$. By induction, $y_k\leq W(kh)$: the case $k=0$ is trivial; assume that $y_k\leq W(kh)$, and because $W$, $q$, and $\mathcal E_h$ are nondecreasing, we have
\begin{align}
W((k+1)h)=W(kh)+\int_{kh}^{(k+1)h}q(W(u)) d u \ \geq\ \mathcal E_h(W(kh)) \ \geq\ \mathcal E_h(y_k)=y_{k+1}. \notag
\end{align}
Therefore, $x\leq y_k\leq W(kh)\leq M_x$ for $0\leq k\leq n$.

\medskip

Now, we are ready to prove that $E_n(x)\to\Theta_z(x)$ as $n\to\oo$. Let 
\begin{align}
e_k:=W(kh)-y_k, \notag
\end{align}
so $e_k\geq0$ and $e_0=0$ by the argument above. Subtracting both sides of $y_{k+1}=y_k+hq(y_k)$ from $W((k+1)h)=W(kh)+\int_{kh}^{(k+1)h}q(W(u)) d u$, we have
\begin{align}
e_{k+1}=e_k+\int_{kh}^{(k+1)h}\left[\,q(W(u))-q(y_k)\,\right] d u .
\label{eq:error-recursion}
\end{align}
For $u\in[kh,(k+1)h]$ both $y_k$ and $W(u)$ lie in $I_x$, and $y_k\leq W(kh)\leq W(u)$. Also, $W(u)-W(kh)\leq hQ_x$, so by the mean value theorem, $0\leq q(W(u))-q(y_k)\leq\Lambda_x\bigl(W(u)-y_k\bigr) \leq\Lambda_x(e_k+hQ_x)$. Then Eq.~\eqref{eq:error-recursion} leads to $e_{k+1}\leq(1+h\Lambda_x)e_k +\Lambda_xQ_xh^{2}$.

Iterating from $e_0=0$, for a finite $z>0$, when $n$ is sufficiently large, we have
\begin{align}
0 \leq \Theta_z(x)-E_n(x)=e_n \leq\ Q_xh\left((1+h\Lambda_x)^{n}-1\right) \leq\ \frac{z\,Q_x\left(e^{\Lambda_x z}-1\right)}{n}, \notag
\end{align}
where the last inequality is due to $(1+z)^n\leq e^{nz}$ for $z\geq0$. Therefore, $E_n\to\Theta_z$ pointwise on $(0,\oo)$ as $n\to\oo$, with finite limit. Together with the fact that $E_n\in\CBF$ we just proved, we conclude that $\Theta_z\in\CBF$ because $\CBF$ is closed under pointwise limits (S6). \QED
 
\begin{lemma}
\label{lem:operations}
Let $f\in\CBF$, $f\not\equiv0$.  Then
\begin{enumerate}
\item[(a)] $f-f(0+)\in\CBF$;
\item[(b)] $s\mapsto f(1/s)$ belongs to $\St$, and therefore, $f^{\ast}(s)=1/f(1/s)$ belongs to $\CBF$.
\end{enumerate}
\end{lemma}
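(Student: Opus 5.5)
The plan is to work directly from the integral representation in (S4). Write $f(t)=a+bt+\int_{(0,\oo)} t/(t+r)\,\rho(dr)$ with $a,b\ge0$ and $\int (1+r)^{-1}\rho(dr)<\oo$.

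\emph{Part (a).} As $t\downarrow0$, each integrand $t/(t+r)$ decreases to $0$ and is bounded by $\min\{1,t/r\}$, which is dominated by $\min\{1,1/r\}\le 2/(1+r)$ for $t\le1$. By dominated convergence, $f(0+)=a$. Hence
\[
f-f(0+)=bt+\int_{(0,\oo)} \frac{t}{t+r}\,\rho(dr),
\]
which is again of the form (S4), with constant term $0$ and the same $b$ and $\rho$. So $f-f(0+)\in\CBF$. This step is routine; the only thing to check is that the limit is exactly $a$.

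\emph{Part (b), first claim.} Substituting $t=1/s$ gives
\[
f(1/s)=a+\frac{b}{s}+\int_{(0,\oo)}\frac{1/s}{1/s+r}\,\rho(dr)=a+\frac bs+\int_{(0,\oo)}\frac{1}{1+rs}\,\rho(dr)=a+\frac bs+\int_{(0,\oo)}\frac{r^{-1}}{s+r^{-1}}\,\rho(dr).
\]
Change variables $\tau=1/r$ and let $\mu$ be the image of the measure $r^{-1}\rho(dr)$ under $r\mapsto 1/r$. Then the integral becomes $\int_{(0,\oo)}(s+\tau)^{-1}\mu(d\tau)$. This is the Stieltjes form in (S1), with constant coefficient $b$ (playing the role of $a/s$), constant term $a$, and measure $\mu$.

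It remains to verify the integrability condition in (S1):
\[
\int_{(0,\oo)}\frac{\mu(d\tau)}{1+\tau}=\int_{(0,\oo)} \frac{r^{-1}}{1+r^{-1}}\,\rho(dr)=\int_{(0,\oo)}\frac{\rho(dr)}{1+r}<\oo,
\]
which is exactly the CBF condition. Thus $s\mapsto f(1/s)\in\St$. Getting this measure bookkeeping right is the only point that needs care.

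\emph{Part (b), second claim.} Since $f\not\equiv0$, the function $s\mapsto f(1/s)$ is not identically zero. Applying (S5) to it gives $1/f(1/s)\in\CBF$, i.e.\ $f^\ast\in\CBF$.
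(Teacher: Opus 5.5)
Your proof is correct and follows essentially the same route as the paper. For (a) you read off $f(0+)=a$ from the (S4) representation. For (b) you substitute $t=1/s$, push the measure forward under $r\mapsto 1/r$, check the integrability condition of (S1), and invoke (S5). You are a bit more careful than the paper: you justify $f(0+)=a$ by dominated convergence, and you correctly take $\mu$ as the image of $r^{-1}\rho(dr)$, where the paper loosely calls it the image of $\sigma$ itself.
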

 
\pf (a) Using the representation in (S4), we have $a=f(0+)$, so $f-f(0+)=bs+\int\frac{s}{s+t}\sigma(d t)$ again satisfies the representation with $a=0$, which proves the claim.

\medskip

(b) Replacing all $s$ by $1/s$ in the representation in (S4),
\begin{align}
f(1/s)=\frac{b}{s}+a+\int_{(0,\oo)}\frac{1}{s+1/t}\,\frac{\sigma(d t)}{t} =\frac{b}{s}+a+\int_{(0,\oo)}\frac{1}{s+r}\,\mu(d r), \notag
\end{align}
where $\mu$ is the image measure of $\sigma$ under the map $t\mapsto1/t$. Because
$\int_{(0,\oo)}(1+r)^{-1}\mu(d r) =\int_{(0,\oo)}\frac{1}{1+1/t}\frac{\sigma(d t)}{t} =\int_{(0,\oo)}\frac{\sigma(d t)}{1+t}<\oo$, $f(1/\cdot)\in\St$. Also, $f \not \equiv 0$, so $f^\ast\in\CBF$ by (S5). \QED
 
\begin{lemma}
\label{lem:T-S-J}
For every $0\leq p\leq1$, the functions $T_p$, $S_p$ and $J_p$ all belong to $\CBF$.
\end{lemma}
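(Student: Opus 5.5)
The plan is to realise $T_p$ as the time-one map of the flow generated by $q_p=1/F_p'$, apply Lemma~\ref{lem:flow} to obtain $T_p\in\CBF$, and then get $S_p$ and $J_p$ from Lemma~\ref{lem:operations}.

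\emph{Step 1: $T_p$ as a flow.} For $x>0$ and $u\ge 0$ define
\[
\Theta_u(x):=F_p^{-1}\bigl(F_p(x)+u\bigr),
\]
so that $\Theta_0(x)=x$ and $\Theta_1=T_p$. By Lemma~\ref{lem:F-prime-Stieltjes}, $F_p'>0$ on $(0,\oo)$ and $F_p(0+)=0$. Moreover $F_p(s)\to\oo$ as $s\to\oo$, since $A(s)\to\oo$. Hence $F_p$ is a smooth increasing bijection of $(0,\oo)$ onto itself. Its inverse is therefore differentiable, and $\Theta_u(x)\in(0,\oo)$ for every $u\in[0,1]$. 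Differentiating $F_p(\Theta_u(x))=F_p(x)+u$ in $u$ gives
\[
\partial_u\Theta_u(x)=\frac{1}{F_p'(\Theta_u(x))}=q_p(\Theta_u(x)).
\]
This is exactly the ODE \eqref{eq:flow-ode} with $q=q_p$. Lemma~\ref{lem:F-prime-Stieltjes} gives $q_p\in\CBF$, so Lemma~\ref{lem:flow} with $z=1$ yields $T_p=\Theta_1\in\CBF$.

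\emph{Step 2: $S_p$.} Since $F_p(0+)=0$, we have $T_p(0+)=F_p^{-1}(1)=d_p$. Then $S_p=T_p-T_p(0+)\in\CBF$ by Lemma~\ref{lem:operations}(a). The one point needing care is that the limit at $0+$ of $T_p$ equals the constant $a$ in its representation (S4). This is immediate because $\int t/(t+r)\,\rho(dr)\to 0$ as $t\downarrow0$ by dominated convergence.

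\emph{Step 3: $J_p$.} We have $S_p\not\equiv 0$: indeed $S_p(x)>0$ for $x>0$ because $T_p$ is strictly increasing. For $t>0$, $J_p(t)=1/S_p(1/t)=S_p^{\ast}(t)$, so Lemma~\ref{lem:operations}(b) gives $J_p\in\CBF$ on $(0,\oo)$. The value $J_p(0)=0$ is consistent with continuity, since $S_p(x)\to\oo$ as $x\to\oo$; this holds because $T_p(x)\ge x$.

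\emph{Case $p=1$ and main obstacle.} For $p=1$ everything is explicit: $T_1(x)=x+1$, $S_1(x)=x$ and $J_1(t)=t$, all in $\CBF$. The main potential obstacle is checking the hypotheses of Lemma~\ref{lem:flow}, namely global existence of the flow on $[0,1]$ inside $(0,\oo)$ and its differentiability. Both follow from the explicit formula for $\Theta_u$ and the fact that $F_p$ is a diffeomorphism of $(0,\oo)$. The rest is bookkeeping with Lemmas~\ref{lem:F-prime-Stieltjes}--\ref{lem:operations}.
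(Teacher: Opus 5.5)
Your proposal is correct and follows the same route as the paper. You realize $T_p$ as the time-one map of the flow $\Theta_z(x)=F_p^{-1}(F_p(x)+z)$ driven by $q_p=1/F_p'\in\CBF$, apply Lemma~\ref{lem:flow}, and then obtain $S_p=T_p-d_p$ and $J_p=1/S_p(1/\cdot)$ from Lemma~\ref{lem:operations}(a) and (b), with your dominated-convergence remark and separate $p=1$ check being harmless extras.
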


\pf By Lemma~\ref{lem:F-prime-Stieltjes}, $F_p\in\BF$ with $F_p'>0$ on $(0,\oo)$. Moreover, $F_p(0)=0$ and $F_p(s)\to\oo$ as $s\to\oo$ because $A(s)\to\oo$. Hence, $F_p$ a strictly increasing bijection of $[0,\oo)$ onto itself, $F_p^{-1} \in C^\oo((0,\oo))$, and
\begin{align}
(F_p^{-1})'(y)=\frac{1}{F_p'(F_p^{-1}(y))}=q_p(F_p^{-1}(y)). \label{eq:inverse-derivative}
\end{align}
Therefore, $T_p=F_p^{-1}(F_p+1)$ is well defined. Clearly, $d_p=F_p^{-1}(1)\in(0,\oo)$, and $S_p=T_p-d_p$ is also well defined. Since $F_p(T_p(x))=F_p(x)+1>F_p(x)$ and $F_p$ is increasing, $T_p(x)>x$. Also,
\begin{align}
S_p(0)=0,\quad S_p(x)>x-d_p\to\oo \mbox{ as } x\to\oo, \quad S_p>0\ \text{on}\ (0,\oo), \label{eq:S-properties}
\end{align}
so $J_p(t)=(S_p(1/t))^{-1}$ is well defined for $t>0$, with $J_p(0+)=0$ and $J_p(\oo)=\oo$.

\medskip

For $z \geq 0$ and $x>0$,  let $\Theta_z(x)=F_p^{-1}(F_p(x) + z)$, which is well defined because $F_p$ is onto $[0,\oo)$.  By Eq.~\eqref{eq:inverse-derivative},
\begin{align}
\frac{\partial}{\partial z}\Theta_z(x) = (F_p^{-1})'(F_p(x)+z) = q_p\left(F_p^{-1}(F_p(x)+z)\right)=q_p(\Theta_z(x)), \quad \Theta_0(x)=x, \notag
\end{align}
so $z \mapsto\Theta_z(x)$ solves Eq.~\eqref{eq:flow-ode}, and $T_p=\Theta_1(x)$. 
By Lemma~\ref{lem:F-prime-Stieltjes}, $q_p\in\CBF$ for any $0\leq p\leq1$. By Lemma~\ref{lem:flow} with $q=q_p$ and $z=1$,
\begin{align}
T_p\in\CBF ,\quad 0\leq p\leq1 .
\notag
\end{align}
Moreover, by Lemma~\ref{lem:operations}(a),
\begin{align}
S_p=T_p-T_p(0+)=T_p-d_p\in\CBF ,\quad S_p(0+)=0. \notag
\end{align}
Because $S_p \not \equiv 0$ by Eq.~\eqref{eq:S-properties}, Lemma~\ref{lem:operations}(b) implies that
\begin{align}
J_p=\frac{1}{S_p(1/\cdot)}\in\CBF ,\quad 0\leq p\leq 1, \notag
\end{align}
which completes the proof. \QED
 
\begin{lemma}
\label{lem:K-a-W-theta}
For $0<a\leq1$, $K_a\in\CBF$, and for $-1\leq\theta<1$, $W_\theta\in\CBF$.
\end{lemma}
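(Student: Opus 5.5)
We first show $K_a\in\CBF$ for $0<a\le1$, and then $W_\theta\in\CBF$ by splitting into the two cases of Eq.~\eqref{eq:W-definition} and using Lemma~\ref{lem:T-S-J} together with closure under composition in (S6).

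For $K_a(t)=(1+t^a)^{1/a}-1$, the key step is a factorization. For $t>0$ we have
\begin{align}
K_a(t)+1=(1+t^a)^{1/a}=t\,(1+t^{-a})^{1/a}. \notag
\end{align}
Since $a\le1$, $1/a\ge1$, so the power $1/a$ is not a complete Bernstein map and we cannot compose directly. Instead we use the involution $f\mapsto f^\ast$ of Lemma~\ref{lem:operations}(b). Consider
\begin{align}
g(t):=\bigl(1+t^{-a}\bigr)^{-1/a}=\frac{t}{(t^a+1)^{1/a}}. \notag
\end{align}
We plan to show $g\in\CBF$ via the criterion that $f\in\CBF$ if and only if $f$ extends holomorphically to $\mathbb C\setminus(-\infty,0]$, is nonnegative on $(0,\infty)$, and maps the upper half-plane into its closure (Theorem~6.2 of \cite{Schilling2012}). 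For $\arg t\in(0,\pi)$ we have $\arg t^a\in(0,a\pi)$, so $\arg(1+t^a)\in(0,a\pi)$ and $\arg(1+t^a)^{1/a}\in(0,\pi)$. Moreover,
\begin{align}
\arg(1+t^a)^{1/a}=\tfrac1a\arg(1+t^a)\le \tfrac1a\arg t^a=\arg t, \notag
\end{align}
because $\arg(1+w)\le\arg w$ for $\arg w\in(0,\pi)$. Hence $\arg g(t)=\arg t-\arg(1+t^a)^{1/a}\in[0,\pi)$, and $g\in\CBF$.

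To bring this back to $K_a$, write $K_a(t)=(1+t^a)^{1/a}-1$ and note that $\arg$ of $K_a(t)$ needs a separate check. We would apply the same half-plane criterion directly. With $w=t^a$, $\arg w\in(0,a\pi)$, we need $\arg\bigl((1+w)^{1/a}-1\bigr)\in[0,\pi]$. Write $h(w)=(1+w)^{1/a}-1$. On the sector $\arg w\in(0,a\pi)$, the map $w\mapsto w^{1/a}$ sends it onto the upper half-plane, and $h$ behaves like $w/a$ near $0$ and like $w^{1/a}$ near $\infty$. We would verify $\operatorname{Im}h\ge0$ on the sector's boundary: on $\arg w=0$ it is real and positive, and on $\arg w=a\pi$ we have $1+w$ with argument less than $a\pi$, so $(1+w)^{1/a}$ has argument in $(0,\pi)$ and $\operatorname{Im}h>0$. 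Since $\operatorname{Im}h$ is harmonic and has controlled growth, a Phragm\'en--Lindel\"of argument gives $\operatorname{Im}h\ge0$ inside. This yields $K_a\in\CBF$, with $K_1=\mathrm{id}$ as a trivial check. An alternative first attempt is to write $K_a=\Theta$-type flow or $K_a=\bigl(K_a^\ast\bigr)^\ast$ and reduce to $g$; if the boundary argument is delicate, we would try this route.

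Then the statement for $W_\theta$ is immediate. For $-1\le\theta\le0$, $W_\theta=J_{-\theta}$ with $-\theta\in[0,1]$, which is in $\CBF$ by Lemma~\ref{lem:T-S-J}. For $0<\theta<1$, $W_\theta=J_0\circ K_{1-\theta}$ with $1-\theta\in(0,1)$, which is in $\CBF$ as a composition of $\CBF$ functions by (S6). We expect the main obstacle to be rigorously establishing $K_a\in\CBF$, specifically the Phragm\'en--Lindel\"of step or an explicit Stieltjes representation of $K_a(t)/t$.
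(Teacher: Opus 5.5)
Your treatment of $W_\theta$ is exactly the paper's, and your verification that $g(t)=t/(1+t^a)^{1/a}\in\CBF$ is correct. The gap is the passage from $g$ to $K_a$. You say $K_a$ needs a separate check and rest it on a Phragm\'en--Lindel\"of argument with ``controlled growth''. As stated, that argument does not work. The function $h(w)=(1+w)^{1/a}-1$ grows like $|w|^{1/a}$, which is exactly the critical order for a sector of opening $a\pi$: the harmonic function $-\operatorname{Im}(w^{1/a})$ vanishes on both boundary rays yet is negative inside. To make it work you would need the one-sided bound $-\operatorname{Im}h(w)\le|h(w)-w^{1/a}|=O(1+|w|^{1/a-1})$.

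The detour is unnecessary anyway, because you have already proved what is needed. Subtracting the real constant $1$ does not change imaginary parts. Hence $\operatorname{Im}K_a(t)=\operatorname{Im}(1+t^a)^{1/a}>0$ whenever $\operatorname{Im}t>0$, by the inequality $0<\tfrac1a\arg(1+t^a)\le\arg t<\pi$ you already established. Your computation on the ray $\arg w=a\pi$ works verbatim at every interior point. Together with $K_a>0$ on $(0,\infty)$ and $K_a(0+)=0$, the half-plane characterization gives $K_a\in\CBF$ directly.

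Alternatively, $g(1/t)=(1+t^a)^{-1/a}$, so $g^\ast=K_a+1$. Note it is $K_a+1$, not $K_a$, that is dual to $g$, so ``$K_a=(K_a^\ast)^\ast$'' is not the right reduction. Lemma~\ref{lem:operations}(b) and then (a) then give $K_a\in\CBF$.

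Once repaired, your route genuinely differs from the paper's. The paper writes $K_a+1=\Theta_1$ for the flow $\Theta_z(t)=(t^a+z)^{1/a}$, which satisfies $\partial_z\Theta_z=q(\Theta_z)$ with $q(y)=a^{-1}y^{1-a}\in\CBF$. It then invokes Lemma~\ref{lem:flow} (approximating $\Theta_1$ by compositions of $\mathrm{id}+hq$) to get $\Theta_1\in\CBF$, and removes the constant with Lemma~\ref{lem:operations}(a). That route reuses the machinery already built for $T_p$ and stays within (S1)--(S8). Your argument instead imports the analytic (upper half-plane) characterization of $\CBF$ from \cite{Schilling2012}, which is not among (S1)--(S8). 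In return it is a short, explicit pointwise verification with no limiting procedure.
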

\pf Let $\Theta_z(t)=\left(t^{a}+z\right)^{1/a}$, then
\begin{align}
\frac{\partial}{\partial z} \Theta_z(t) = \frac{\partial}{\partial z}\left(t^{a}+z\right)^{1/a} = \frac1a\left(t^{a}+z\right)^{(1-a)/a} =\frac1a\Theta_z(t)^{1-a} =: q(\Theta_z(t)), \notag
\end{align}
with $q(y)=a^{-1}y^{1-a}$. Because $y\mapsto y^{\alpha} \in \CBF$ for $0 < \alpha < 1$ (\cite{Schilling2012}), and a constant $a^{-1} \in \CBF$, $q \in \CBF$ for $0<a\leq1$. Also, $\Theta_0(t)=t$. Hence, $\Theta_1\in\CBF$ by Lemma~\ref{lem:flow}.
Since $\Theta_1(0+)=1$ and $K_a(t)=\left(1+t^{a}\right)^{1/a}-1=\Theta_1(t)-\Theta_1(0+)$, Lemma~\ref{lem:operations}(a) implies that
\begin{align}
K_a \in \CBF, \quad 0<a\leq1, \label{eq:K-a-CBF}
\end{align}
with $K_a(0)=0$ and $K_a(\oo)=\oo$.

\medskip

Because $\CBF$ is closed under composition (S6), Lemma~\ref{lem:T-S-J} and Eq.~\eqref{eq:K-a-CBF} imply that
\begin{align}
\CBF \ni W_\theta=
\begin{cases}
J_{-\theta}, & -1\leq\theta\leq0,\\[1mm]
J_0\circ K_{1-\theta}, & 0<\theta<1,
\end{cases} \notag
\end{align}
Moreover, in the proof of Lemma~\ref{lem:T-S-J} we have shown that $J_p(0+)=0$ and $J_p(\oo)=\oo$ for every $0\leq p\leq1$, and $K_a(0)=0$ and $K_a(\oo)=\oo$ for every $0<a\leq1$. Therefore, as $t\to\oo$,
\begin{align}
W_\theta(0)=0, \quad W_\theta(t)\to\oo, \quad -1\leq\theta<1. \notag
\end{align} \QED
 
\begin{lemma}
\label{lem:G-properties}
For $-1\leq\eta<1$, $G_\eta\in\CM$ with $G_\eta(0)=1$ and $G_\eta(\oo)=0$.
\end{lemma}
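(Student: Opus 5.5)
The plan is to write $G_\eta$ as a completely monotone outer function composed with a Bernstein function and then invoke (S8). Both branches of the definition have this form.

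\emph{Case $-1\le\eta\le0$.} Set $p=-\eta\in[0,1]$. By Lemma~\ref{lem:F-prime-Stieltjes}, $F_p\in\BF$. The function $g(y)=e^{-y}$ is completely monotone, since $(-1)^n g^{(n)}(y)=e^{-y}\ge0$. Hence $G_\eta=g\circ F_p\in\CM$ by (S8).

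\emph{Case $0<\eta<1$.} Set $b=b_\eta=(1-\eta)\eta^{-2}>0$ and $g_b(y)=(1+y/b)^{-b}$. Differentiating gives $(-1)^n g_b^{(n)}(y)=b(b+1)\cdots(b+n-1)\,b^{-n}(1+y/b)^{-b-n}\ge0$, so $g_b\in\CM$. Equivalently, $g_b$ is the Laplace transform of a Gamma distribution. Again $F_\eta\in\BF$ by Lemma~\ref{lem:F-prime-Stieltjes}, and (S8) gives $G_\eta=g_b\circ F_\eta\in\CM$.

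One technicality concerns the domain. (S8) is stated on $(0,\infty)$, while $G_\eta(0)$ is understood by continuity. Since $F_p$ is continuous on $[0,\infty)$, $G_\eta$ extends continuously to $0$.

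\emph{Boundary values.} Lemma~\ref{lem:F-prime-Stieltjes} shows $F_p(0+)=0$, because $A(0)=\arcosh(1)=0$ and $\cosh(0)-1=0$, and the same holds in the limit case $p=0$. Therefore $G_\eta(0)=e^{0}=1$, or $(1+0)^{-b}=1$ in the second case. For $s\to\infty$, $A(s)\to\infty$ forces $F_p(s)\to\infty$ for every $p\in[0,1]$: for $p>0$ because $\cosh(pA)\to\infty$, and for $p=0$ because $A^2/2\to\infty$. Thus $e^{-F_p(s)}\to0$ and $(1+F_\eta(s)/b)^{-b}\to0$, which gives $G_\eta(\infty)=0$.

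There is no real obstacle here. The only points needing care are checking that $g_b$ is completely monotone, which is a routine derivative computation, and applying (S8) with the correct domain conventions at $s=0$.
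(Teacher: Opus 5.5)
Your proof is correct and follows the paper's argument essentially step for step: write $G_\eta=g\circ F_p$, with $g(u)=e^{-u}$ or the Gamma Laplace transform $g(u)=(1+u/b_\eta)^{-b_\eta}$, use $F_p\in\BF$ from Lemma~\ref{lem:F-prime-Stieltjes} together with (S8), and read off the boundary values from $F_p(0)=0$ and $F_p(\oo)=\oo$. Your explicit derivative check that $g_b\in\CM$ and your remark on extending $G_\eta$ to $s=0$ by continuity are minor details that the paper leaves implicit.
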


\pf For $p\in[0,1]$, Lemma~\ref{lem:F-prime-Stieltjes} implies that $F_p\in\BF$. 
If $-1\leq\eta\leq0$, then $p=-\eta$, $g(u)=e^{-u}$, and $g \in \CM$ because $(-1)^ng^{(n)}(u)=e^{-u}\geq0$. Therefore, $G_\eta=g\circ F_p \in \CM$ by (S8). If $0<\eta<1$, then $p=\eta$ and $g(u)=(1+u/b)^{-b}$ with $b=b_\eta=(1-\eta)\eta^{-2}\in(0,\oo)$. Here, $g$ is the Laplace transform of a Gamma random variable, and therefore, $g\in\CM$, and $G_\eta=g\circ F_p \in \CM$ by (S8). 

Furthermore, $g(0)=1$ and $g(\oo)=0$ in both cases, and notice that $F_p(0)=0$ and $F_p(\oo)=\oo$, so that
$G_\eta(0)=1, G_\eta(\oo)=0$. \QED

\bigskip

\noindent \emph{Proof of Proposition \ref{prop:generator-properties}:} Lemmas \ref{lem:G-properties}, \ref{lem:K-a-W-theta}, and (S8) imply that
\begin{align}
\psi_{\eta,\theta}=G_\eta\circ W_\theta\in\CM. \notag
\end{align}
Moreover, 
\begin{align}
\psi_{\eta,\theta}(0)=G_\eta(W_\theta(0))=G_\eta(0)=1,  \quad \lim_{t\to\oo}\psi_{\eta,\theta}(t) =\lim_{s\to\oo}G_\eta(s)=0, \notag
\end{align}
the second limit because $W_\theta(t)\to\oo$ and $G_\eta$ is monotone. Both $G_\eta$ and $W_\theta$ are continuous on $[0,\oo)$, hence so is $\psi_{\eta,\theta}$. Finally, a completely monotone function is of the form $\psi(t)=\int_{[0,\oo)}e^{-tw}\sigma(d w)$ with $\sigma\geq0$; here $\sigma$ is a probability measure because $\psi_{\eta,\theta}(0)=1$, and $\sigma(\{0\}) =\lim_{t\to\oo}\psi_{\eta,\theta}(t)=0$, so
\begin{align}
\psi_{\eta,\theta}'(t)=-\int_{(0,\oo)}we^{-tw}\sigma(d w)<0, \quad t>0, \notag
\end{align}
i.e.\ $\psi_{\eta,\theta}$ is strictly decreasing.  It is therefore a continuous strictly decreasing bijection from $[0,\oo)$ onto $(0,1]$, so $\phi_{\eta,\theta}=\psi_{\eta,\theta}^{-1}$ is well defined on $(0,1]$ (with $\phi_{\eta,\theta}(0+)=\oo$).

With $\psi_{\eta,\theta}(0)=1$, $\psi_{\eta,\theta}(\oo)=0$, and $\psi_{\eta,\theta}$ continuous and strictly decreasing on $[0,\oo)$, it follows that $\phi_{\eta,\theta}=\psi_{\eta,\theta}^{-1}$ is well defined on $(0,1]$ and $C_{\eta,\theta}$ of Eq.~\eqref{eq:archimedean-copula} is a valid Archimedean copula; indeed $\psi_{\eta,\theta}$ generates a valid $d$-variate Archimedean copula for every $d\geq2$. \QED

\section{Details of the closed-form expressions}
\label{appendix:closed-form-expressions}
In addition to the closed-form expressions for quantities such as $F_p(s)$, $A(s)$, and $K_a(t)$ in Proposition \ref{prop:generator-properties}, we can derive following closed-form expressions:
\begin{align}
F_p^{-1}(y) &= \begin{cases}
\cosh\left(\frac{1}{p}\arcosh(1+p^2y)\right)-1, &0<p\leq1,\\[3mm]
\cosh(\sqrt{2y})-1, &p=0.
\end{cases} \notag \\
d_p &=
\begin{cases}
\cosh\left( \frac{1}{p}\arcosh(1+p^2) \right)-1, & 0<p\leq1,\\[3mm]
\cosh(\sqrt{2})-1, &p=0.
\end{cases} \notag \\
S_p^{-1}(y) &= F_p^{-1}\left(F_p(y+d_p)-1\right). \notag \\
J_p^{-1}(y) &=\left[ F_p^{-1} \left( F_p\left(d_p+\frac{1}{y}\right)-1 \right) \right]^{-1}, \qquad y>0. \notag \\
K_a^{-1}(y) & =[(1+y)^a-1]^{1/a}. \notag \\
W_\theta^{-1}(y) & = \begin{cases}
J_{-\theta}^{-1}(y), &-1\leq\theta\leq0,\\[2mm]
K_{1-\theta}^{-1}(J_0^{-1}(y)), &0<\theta<1.
\end{cases} \notag \\
G_\eta^{-1}(u) &= \begin{cases}
F_{-\eta}^{-1}(-\log u), &-1\leq\eta\leq0,\\[2mm]
F_\eta^{-1} \left[ b_\eta(u^{-1/b_\eta}-1) \right], &0<\eta<1.
\end{cases}\notag \\
\phi_{\eta,\theta}(u) & =\psi_{\eta,\theta}^{-1}(u) =
\begin{cases}
J_{-\theta}^{-1}\left(z_\eta(u)\right), &-1\leq\theta\leq0,\\[2mm]
K_{1-\theta}^{-1} \left[ J_0^{-1}(z_\eta(u)) \right], &0<\theta<1;
\end{cases} \qquad z_\eta(u) :=G_\eta^{-1}(u). \label{eq:psi-inverse}
\end{align}

Letting $\Delta(s) =\sqrt{s(s+2)}$, for $0<p\leq1$,
\begin{align}
F_p'(s) =\frac{\sinh(pA(s))}{p \Delta(s)}, \quad F_p''(s) = \frac{p\cosh(pA(s))\Delta(s) -(1+s)\sinh(pA(s))}{p\Delta(s)^3}, \notag
\end{align}
where $\Delta'(s)=(1+s)/\Delta(s)$. For $p=0$,
\begin{align}
F_0'(s)  = \frac{A(s)}{\Delta (s)}, \quad F_0''(s) = \frac{\Delta (s)-(1+s)A(s)}{\Delta(s)^3}, \notag
\end{align}
with $F_p'(0+) =1, F_p''(0+) =-(1-p^2)/3$.

Write $T=T_p(x)$, and then
\begin{align}
T_p'(x) = \frac{F_p'(x)}{F_p'(T)} \quad T_p''(x) =\frac{F_p''(x)}{F_p'(T)} -\frac{(F_p'(x))^2F_p''(T)}{(F_p'(T))^3}. \notag
\end{align}
Since $S_p=T_p-d_p$ and $d_p$ is constant with respect to $x$,
\begin{align}
S_p'(x)  &=T_p'(x), & S_p''(x) &=T_p''(x). \notag
\end{align}

For $t>0$, let $x=1/t$ and $S=S_p(x)$. Then,
\begin{align}
J_p'(t) = \frac{S_p'(x)}{t^2S^2}, \quad J_p''(t) = -\frac{2S_p'(x)}{t^3S^2} -\frac{S_p''(x)}{t^4S^2} +\frac{2(S_p'(x))^2}{t^4S^3}. \notag
\end{align}
Clearly,
\begin{align}
K_a'(t) =t^{a-1}(1+t^a)^{1/a-1}, \quad K_a''(t) = (a-1)t^{a-2}(1+t^a)^{1/a-2}. \notag
\end{align}
Therefore,
\begin{align}
W_\theta'(t) &=J_p'(t), \quad W_\theta''(t) =J_p''(t), \quad -1\leq\theta\leq0. \notag \\
W_\theta'(t) &=J_0'(k)K_a'(t), \quad W_\theta''(t) =J_0''(k)(K_a'(t))^2 +J_0'(k)K_a''(t), \quad a=1-\theta, \quad 0<\theta<1. \notag
\end{align}

For $-1\leq\eta\leq0$, let $p=-\eta$. Then,
\begin{align}
G_\eta'(s) =-F_p'(s)G_\eta(s), \quad G_\eta''(s) =\left[(F_p'(s))^2-F_p''(s)\right]G_\eta(s). \notag
\end{align}

For $0<\eta<1$, let $p=\eta$ and $Q_\eta(s) =1+F_p(s)/b_\eta$. Then,
\begin{align}
G_\eta'(s) &=-F_p'(s)Q_\eta(s)^{-b_\eta-1}, \notag \\
G_\eta''(s) &=-F_p''(s)Q_\eta(s)^{-b_\eta-1} +\frac{b_\eta+1}{b_\eta}(F_p'(s))^2Q_\eta(s)^{-b_\eta-2}. \notag
\end{align}

Write $w=W_\theta(t), w_1 =W_\theta'(t), w_2=W_\theta''(t)$, and then
\begin{align}
\psi_{\eta,\theta}'(t) =G_\eta'(w)w_1, \quad \psi_{\eta,\theta}''(t) =G_\eta''(w)w_1^2+G_\eta'(w)w_2. \label{eq:psi-derivatives}
\end{align}

\medskip

For $u,v\in(0,1)$, let $x =\phi_{\eta,\theta}(u), y =\phi_{\eta,\theta}(v)$, and then the copula density is
\begin{align}
c_{\eta,\theta}(u,v) = \frac{\psi_{\eta,\theta}''(x+y)}{\psi_{\eta,\theta}'(x) \psi_{\eta,\theta}'(y)}. \notag
\end{align}

Let $(U,V)$ be a random vector with copula $C_{\eta,\theta}$ defined in Eq.~\eqref{eq:archimedean-copula}. Then,
\begin{align}
h_{1\mid 2}(u\mid v) & :=F_{U\mid V=v}(u) =\frac{\partial}{\partial v}C_{\eta,\theta}(u,v) =\frac{\partial}{\partial v}\psi_{\eta,\theta}(\phi_{\eta,\theta}(u)+\phi_{\eta,\theta}(v)) \notag \\
& =\psi_{\eta,\theta}'(\phi_{\eta,\theta}(u)+\phi_{\eta,\theta}(v))\phi_{\eta,\theta}'(v) = \frac{\psi_{\eta,\theta}'(x+y)}{\psi_{\eta,\theta}'(y)}. \notag
\end{align}
Therefore, for a given $q\in(0,1)$ and $v\in(0,1)$, the inverse of the $h$-function with respect to its first argument is given by
\begin{align}
h_{1\mid2}^{-1}(q\mid v) =\psi_{\eta,\theta} \left[(\psi_{\eta,\theta}')^{-1} (q\psi_{\eta,\theta}'(y)) -y \right]. \label{eq:inverse-h-formal}
\end{align}

\section{Proof of Proposition \ref{prop:tail-map}}
\label{appendix:dependence}

\pf To prove the tail dependence properties, we need to first study the tail behavior of $G_\eta(s)$ and $W_\theta(s)$, respectively. 

\medskip

\underline{Part I: Tail behavior of $G_\eta(s)$:}

\medskip

Based on the definition of $G_\eta(s)$, $F_\eta(s)$ and thus $T_\eta(s)$ play the central role, we first analyze these functions as $s \to \infty$ and $s \downarrow 0$, respectively.
In what follows, we use $x=A(s)=\arcosh(1+s)$, so that $s=A^{-1}(x)=\cosh (x-1)$, and $L=L(s)=e^{x}=1+s+s\sqrt{1+2/s}$ so that $A(s)=\log L(s)$.

\underline{For $F_p(s)$, as $s \to \oo$}, $\Delta(s):=s\sqrt{1+2/s}=s+1-\frac{1}{2s}+O(s^{-2})$ by the binomial series, and $A=\log L=\log(2s)+\log(1+s^{-1}+O(s^{-2}))=\log(2s)+s^{-1}+O(s^{-2})$. Write 
\begin{align}
L=2s[1+s^{-1}+O(s^{-2})], \label{eq:L-large-s}
\end{align}
then $L^{p}=(2s)^{p}(1+ps^{-1}+O(s^{-2}))$. Because $F_p(s)=[\cosh(pA(s))-1]p^{-2} =[L^{p}+L^{-p}-2]/(2p^{2})$,
\begin{align}
F_p(s) \sim \frac{(2s)^{p}}{2p^{2}}, \quad s \to \infty, \label{eq:F-infty}
\end{align}
and thus $F_p\in\RV_{p}$ at $\infty$ for $p>0$. If $p=0$, then $F_0=\tfrac12A^{2}$ and thus,
\begin{align}
F_0(s) \sim \tfrac12\log^{2}(2s), \quad s \to \infty, \notag
\end{align}
and therefore, $F_0$ is slowly varying at $\oo$.

\medskip

\underline{For $F_p(s)$, as $s\downarrow0$}, and thus, $x \downarrow 0$, 
\begin{align}
s=\cosh(x) -1=x^{2}/2+ x^4/24 + O(x^{6}). \label{eq:s-small-x}
\end{align}
So, we can write $x^{2}=2s+ z s^{2}+O(s^{3})$ for an unknown constant $z$. Then, $x^{4}=4s^{2}+O(s^{3})$. Plugging these into Eq.~\eqref{eq:s-small-x} gives $z = -1/3$. Therefore,
\begin{align}
F_p(s)=&\frac{\cosh(p x)-1}{p^{2}} =\frac{x^{2}}{2}+\frac{p^{2}x^{4}}{24}+O(x^{6}) =\left(s-\frac{s^{2}}{6}\right)+\frac{p^{2}\cdot4s^{2}}{24}+O(s^{3}), \notag \\
   & = s-\frac{1-p^2}{6}s^2+O(s^3), \quad 0 \leq p \leq 1, \quad s \downarrow 0,  \label{eq:F-zero-derivation}
\end{align}
where the case $p=0$ satisfies because $F_0(s) = A^2(s)/2 = x^2/2 = s - s^2/6 + O(s^3)$.

\medskip

\underline{Tail behavior of $T_p$}:

\medskip

For $a\ge0$ and $k>0$, let $\tau_a(y)=y+a, m_k(y)=k y,\sigma_a=A\circ\tau_a\circ A^{-1}$, and then they are maps of $[0,\infty)$ into itself. Define, for $p >0$, $\nu_p(x):=\frac1p\arcosh\left(\cosh(px)+p^{2}\right)$. Then,
\begin{align}
T_p &=F_p^{-1}\circ\tau_{1}\circ F_p \notag \\
&=\left(A^{-1}\circ m_{p^{-1}}\circ A\circ m_{p^{2}}\right) \circ\tau_{1}\circ \left(m_{p^{-2}}\circ A^{-1}\circ m_{p}\circ A\right) \notag \\
&=A^{-1}\circ m_{p^{-1}}\circ A \circ\left(m_{p^{2}}\circ\tau_{1}\circ m_{p^{-2}}\right) \circ A^{-1}\circ m_{p}\circ A \notag \\
&=A^{-1}\circ m_{p^{-1}} \circ\left(A\circ\tau_{p^{2}}\circ A^{-1}\right) \circ m_{p}\circ A
\notag \\
&=A^{-1}\circ \left(m_{p^{-1}}\circ\sigma_{p^{2}}\circ m_{p}\right) \circ A \notag \\
&=A^{-1}\circ\nu_p\circ A. \label{eq:T-nu}
\end{align}
For $p=0$, define $\nu_0(x)=\sqrt{x^{2}+2}$, and then Eq. (\ref{eq:T-nu}) still holds. 

For a given $0<p\le 1$, let $s\to\infty$, so that $x=A(s)\to\infty$.  Write $\epsilon_p(x)=\nu_p(x)-x$, and recall $A^{-1}(z)=\cosh (z) -1$, $\cosh (x)=1+s$, and $\sinh (x)=\tfrac12\left(L-L^{-1}\right)$.  Then
\begin{align}
T_p(s) &=A^{-1}\left(\nu_p\left(A(s)\right)\right) =\cosh\left(x+\epsilon_p(x)\right)-1 \notag \\[1mm]
&=\cosh (x)-1+\epsilon_p(x)\sinh(x) +O\!\left(\epsilon_p(x)^{2}\cosh (x)\right) \label{eq:a2}\\[1mm]
&=s+\epsilon_p(x)\sinh(x) +O\!\left(\epsilon_p(x)^{2}\cosh (x)\right) \notag \\[1mm]
&=s+2pe^{-px}\cdot\frac{L-L^{-1}}{2} + O\!\left(L^{1-2p}\right) \label{eq:a4}\\[1mm]
&=s+p\,L^{1-p}+O\!\left(L^{-1}\right)+O\!\left(L^{1-2p}\right) \notag\\[1mm]
&=s+p\,2^{1-p}s^{1-p}\left[1+O\!\left(s^{-1}\right)\right] + O\!\left(s^{-1}\right)+O\!\left(s^{1-2p}\right) \notag\\[1mm]
&=s+p\,2^{1-p}s^{1-p}+o\!\left(s^{1-p}\right),  \qquad 0 < p \leq 1, \qquad s \to \oo . \label{eq:a7}
\end{align}
Eq. \eqref{eq:a2} is obtained via Taylor's expansion of $\cosh$ at $x$. For Eq. \eqref{eq:a4}, note that
\begin{align}
p \nu_p(x) & = \arcosh(\cosh(px)+p^{2}) =  \arcosh(\tfrac12e^{px}\left[1+2p^{2}e^{-px}+e^{-2px}\right]) \notag \\
  & =  px+\log\left(1+2p^{2}e^{-px}+e^{-2px}\right) + O\!\left(e^{-2px}\right) \label{eq:nu-infty} \\
  &= px+2p^{2}e^{-px}+O\!\left(e^{-2px}\right), \notag
\end{align}
where Eq. \eqref{eq:nu-infty} is due to the expansion $\arcosh (z)=\log(2z)+O(z^{-2})$ as $z\to\infty$. Therefore, $\epsilon_p(x)=2pe^{-px}+O(e^{-2px})$. Also, $\sinh(x)=\tfrac12(L-L^{-1})$, so, $O(e^{-2px})\sinh(x)=O(L^{-2p}\cdot L)$, and furthermore, $\epsilon_p^{2}\cosh (x)=O(L^{-2p}\cdot L)$, so we can conclude Eq. \eqref{eq:a4}. The remaining steps are due to Eq. \eqref{eq:L-large-s}.

For $p=0$, as $x \to \oo$, $\epsilon_0(x)=\sqrt{x^{2}+2}-x = x(\sqrt{1 + 2x^{-2}}-1) = x\left(1 + x^{-2} - 1\right) + O(x^{-3}) = x^{-1}+O(x^{-3})$. Then,
\begin{align}
T_0(s) &=\cosh\left(x+\epsilon_0(x)\right)-1 =s+\frac{\sinh(x)}{x}\left[1+O\!\left(x^{-2}\right)\right] +O\!\left(\frac{\cosh(x)}{x^{2}}\right) \notag\\[1mm]
&=s+\frac{s}{\log(2s)}+O\!\left(\frac{s}{\log^{2}s}\right), \notag
\end{align}
because $\sinh(x)=\sqrt{s(s+2)}=s[1+O(s^{-1})]$, $\cosh (x)=1+s$, and $x=\log(2s)+O(s^{-1})$, as $s\to\infty$.

\medskip

As $s\downarrow0$, $T_p(s) \to F_p^{-1}(1)=d_p$.
Because $F_p^{-1}(y) = \cosh\left(\tfrac1p\arcosh(1+p^{2}y)\right)-1$, $x_p^{*}:=A(d_p)=A(F_p^{-1}(1))=A\left(\cosh\left(\tfrac1p\arcosh(1+p^{2})\right)-1\right) = \tfrac1p\arcosh(1+p^{2})$ for $p>0$, and $x_0^{*}:=A(d_0)=A(F_0^{-1}(1))=A\left(\cosh(\sqrt{2})-1\right)=\sqrt{2}$ for $p=0$. Also, because $A\circ T_p=\nu_p\circ A$ and $A(0)=0$, we have $\nu_p(0) = A(d_p) = x_p^{*}$. Now, differentiating both sides of $T_p=A^{-1}\circ\nu_p\circ A$, we have,
\begin{align}
T_p'(s) &=\left(A^{-1}\right)'\!\left(\nu_p(x)\right)\cdot\nu_p'(x)\cdot A'(s) =\sinh\left(\nu_p(x)\right) \cdot\frac{\sinh (px)}{\sinh\left(p\nu_p(x)\right)} \cdot\frac{1}{\sinh (x)} \notag \\[1mm]
       &=\frac{\sinh\left(\nu_p(x)\right)}{\sinh\left(p\nu_p(x)\right)}\cdot\frac{\sinh (px)}{\sinh (x)}\rightarrow \frac{\sinh (x_p^{*})}{\sinh\left(px_p^{*}\right)} \cdot p =\frac{\sinh (x_p^{*})}{\sqrt{2+p^{2}}} =: \frac{1}{\beta_p}, \qquad x \downarrow 0. \notag
\end{align}
Also, because $T_p'=F_p'/(F_p'\circ T_p)$ and $F_p'(0+)=1$, we have $T_p'(0+)=1/F_p'(d_p)$. Therefore,
\begin{align}
\beta_p=F_p'(d_p)=\sqrt{2+p^{2}}/\sinh(x_p^{*}) = \frac{\sqrt{2+p^{2}}}{\sinh\left(p^{-1}\arcosh(1+p^{2})\right)}. \notag
\end{align}
Since $T_p$ is $C^{\infty}$ at $0$, $S_p(s)=T_p(s)-T_p(0)=T_p'(0+)s[1+O(s)]=\beta_p^{-1}s[1+O(s)]$ as $s\downarrow0$. Here $\beta_0=\sqrt2/\sinh(\sqrt 2)$, and $\beta_p\in(0,1]$ is increasing in $p$.

\medskip

\underline{Part II: Tail behavior of $W_\theta(t)$:}

\medskip

\underline{As $t \to \oo$}, from the expansion for $S_p$ above, for any \underline{$0\leq p\leq1$},
\begin{align}
J_p(t) &=\frac{1}{S_p(1/t)} =\frac{1}{\beta_p^{-1}t^{-1}\left[1+O(t^{-1})\right]} =\beta_p t\left[1+O(t^{-1})\right], \quad t\to\infty. \label{eq:J-infty-proof}
\end{align}

Also, for $0<a\leq1$, the binomial expansion leads to
\begin{align}
K_a(t) &=t\left(1+t^{-a}\right)^{1/a}-1 =t\left[1+\frac{1}{a}t^{-a}-t^{-1}+O(t^{-2a})\right] =t\left[1+O(t^{-a})\right], \quad t\to\infty. \label{eq:K-infty-proof}
\end{align}

Therefore,
\begin{align}
W_\theta(t) =\beta_{\vert \theta\vert}t\left[1+O\left(t^{-\xi}\right)\right],
\qquad t\to\infty, \qquad \xi =
\begin{cases}
1, & -1\leq\theta\leq0,\\[1mm]
1-\theta, & 0<\theta<1,
\end{cases} \label{eq:W-infty}
\end{align}
where the case for $-1\leq\theta\leq0$ follows directly from Eq.~\eqref{eq:J-infty-proof}, and for $0<\theta<1$, let $a=1-\theta$. Then $K_a(t)\to\infty$, and Eqs.~\eqref{eq:J-infty-proof} and \eqref{eq:K-infty-proof} lead to $J_0(K_a(t))=\beta_0t[1+O(t^{-a})]$. Thus, $W_\theta(t)$ is asymptotically linear with the slope $\beta_{\vert \theta \vert}$ as $t \to \oo$ for every $-1\leq\theta<1$.

\medskip

\underline{As $t \downarrow 0$}, \underline{if $0<p<1$}, let $s=1/t$ and $c_p=p2^{1-p}$, and then Eq.~\eqref{eq:a7} implies that
\begin{align}
S_p(1/t) =T_p(1/t)-d_p =t^{-1}\left(1+c_p t^{p}-d_p t+o(t^{p})\right) =t^{-1}\left(1+c_p t^{p}+o(t^{p})\right), \quad t \downarrow 0, \notag
\end{align}
where the constant $d_p$ is absorbed into $o(t^p)$ as $p<1$, $t=o(t^p)$. Therefore, $(1+t)^{-1}=1-t+o(t)$, $t\downarrow 0$, implies that 
\begin{align}
J_p(t) &=\frac{1}{S_p(1/t)} =t\left(1-c_pt^{p}+o(t^{p})\right) =t-p2^{1-p}t^{1+p}+o\left(t^{1+p}\right), \qquad t\downarrow 0. \label{eq:J-zero-interior}
\end{align}

\underline{If $p=1$}, then clearly,
\begin{align}
F_1(s)=s, \quad T_1(s)=s+1,\quad d_1=1,\quad S_1(s)=s,\quad J_1(s)=s. \label{eq:J-one-exact}
\end{align}

\underline{If $p=0$}, write $r=A(x)$ and $\delta_r=\sqrt{r^2+2}-r$. Then, as $x\to\infty$, i.e., as $r\to\infty$ and $\delta_r\to 0$, by binomial and Taylor expansions, one has
\begin{align}
\delta_r &=r^{-1}-\frac{1}{2}r^{-3}+O(r^{-5}), \notag \\[1mm]
\sinh(\delta_r) &=r^{-1}-\frac{1}{3}r^{-3}+O(r^{-5}), \notag \\[1mm]
\cosh(\delta_r)-1 &=\frac{1}{2}r^{-2}+O(r^{-4}), \notag
\end{align}

Because 
\begin{align}
T_0(x) & = \cosh(r+\delta_r)-1 = \cosh(r) \cosh(\delta_r) + \sinh(r) \sinh(\delta_r) - 1 \notag \\[1mm]
       & = (x+1)\cosh( \delta_r) + \sqrt{x(x+2)} \sinh(\delta_r) -1, \notag
\end{align}
and $x^{-1}=o(r^{-m})$ for $m>0$ as 
\begin{align}
r=A(x)=\log(2x)+x^{-1}+O(x^{-2}), \label{eq:A-refined}
\end{align}
\begin{align}
\frac{S_0(x)}{x} &=\frac{T_0(x)-d_0}{x} =1+\frac{x+1}{x}\left[\cosh(\delta_r)-1\right] +\frac{\sqrt{x(x+2)}}{x}\sinh(\delta_r)-\frac{d_0}{x} \notag \\[1mm]
&=1+\frac{1}{r}+\frac{1}{2r^2}+O(r^{-3}). \label{eq:S-zero-refined}
\end{align}

Letting $s=1/x$, $r_s=A(1/s)$, and $\ell_s=\log(2/s)$, as $x\to\infty$, implies that $r_s=\ell_s+O(s)$. Also, as $s \downarrow 0$, $(1+s)^{-1} = 1 -s +s^2+O(s^3)$. Therefore, Eq.~\eqref{eq:S-zero-refined} implies that
\begin{align}
J_0(s) &=\frac{1}{S_0(1/s)} =s\left[1+\frac{1}{r_s}+\frac{1}{2r_s^2}+O(r_s^{-3})\right]^{-1} =s\left[1-\frac{1}{r_s}+\frac{1}{2r_s^2}+O(r_s^{-3})\right] \notag \\[1mm]
&=s\left[1-\frac{1}{\ell_s}+\frac{1}{2\ell_s^2}+O(\ell_s^{-3})\right],
\qquad s\downarrow0.
\label{eq:J-zero-boundary}
\end{align}
Finally, the binomial expansion leads to the following expression:
\begin{align}
K_a(s) &=\left(1+s^a\right)^{1/a}-1 =\frac{s^a}{a}\left[1+O(s^a)\right], \qquad s\downarrow0.\label{eq:K-zero-proof}
\end{align}
and thus $\log(2/K_a(s))=a\log(1/s)+O(1)$, as $s\downarrow0$.

\medskip

\underline{For $0<\theta<1$}, let $a=1-\theta$. Then, Eq.~\eqref{eq:K-zero-proof} and \eqref{eq:J-zero-boundary} imply that
\begin{align}
W_\theta(s) &=J_0\left(K_a(s)\right) =\frac{s^a}{a}\left[1+O\left(\frac{1}{\log(1/s)}\right)\right],
\qquad s\downarrow0.
\label{eq:W-positive-zero}
\end{align}

Combining Eqs.~\eqref{eq:J-zero-interior}, \eqref{eq:J-one-exact}, \eqref{eq:J-zero-boundary}, and \eqref{eq:W-positive-zero}, we have

\begin{align}
W_\theta(s) &=
\begin{cases}
s, & \theta=-1,\\[1mm]
s-(-\theta)2^{1+\theta}s^{1-\theta}+o(s^{1-\theta}), & -1<\theta<0,\\[1mm]
\displaystyle s\left[1-\frac{1}{\ell_s}+\frac{1}{2\ell_s^2}+O(\ell_s^{-3})\right], & \theta=0,\\[3mm]
\displaystyle \frac{s^{1-\theta}}{1-\theta} \left[1+O\left(\frac{1}{\log(1/s)}\right)\right], &0<\theta<1,
\end{cases}
\qquad s\downarrow0, \label{eq:W-zero-all}
\end{align}
where $\ell_s=\log(2/s)$.

\medskip

\underline{Part III: Lower tail:}

\medskip

Write $\psi=\psi_{\eta,\theta} = G_{\eta} \circ W_\theta$, $t=\psi^{-1}(u)$, so that $t\to\infty$ as $u\downarrow0$. Then
\begin{align}
C(u,u) =\psi\left(\psi^{-1}(u)+\psi^{-1}(u)\right) =\psi(2t), && \lambda_L =\lim_{t\to\infty}\frac{\psi(2t)}{\psi(t)}, && \kappa_L &=\lim_{t\to\infty}\frac{\log\psi(2t)}{\log\psi(t)}, \notag
\end{align}
whenever the limits exist.

By Eq.~\eqref{eq:W-infty}, $W_\theta(t) =\beta_{\vert \theta\vert}t\left[1+O\left(t^{-\xi}\right)\right]$. \underline{For $p>0$}, Eqs.~\eqref{eq:F-infty} and \eqref{eq:W-infty} implies that
\begin{align}
H_p(t) :=F_p\left(W_\theta(t)\right) \sim\frac{\left(2W_\theta(t)\right)^{p}}{2p^2} \sim\frac{(2\beta_{\vert \theta\vert})^p}{2p^2}t^p, \qquad t\to\infty. \label{eq:Hpt-infty-proof}
\end{align}
Therefore, $H_p\in\RV_p$ at infinity, and
\begin{align}
\frac{H_p(2t)}{H_p(t)} &\rightarrow 2^p, \qquad t\to\infty. \notag
\end{align}
The above limit does not depend on $\theta$, and hence the lower-tail behavior is unaffected by $\theta$.

\underline{If $-1\leq\eta<0$}, let $p=-\eta\in(0,1]$, and then $\psi(t)=\exp\{-H_p(t)\}$. Therefore, as $t\to\infty$, we have
\begin{align}
\kappa_L &=\lim_{t\to\infty}\frac{\log\psi(2t)}{\log\psi(t)} = \lim_{t\to\infty} \frac{-H_p(2t)}{-H_p(t)} = 2^p=2^{-\eta}, \notag \\[1mm]
\lambda_L &= \lim_{t\to\infty} \frac{\psi(2t)}{\psi(t)} = \lim_{t\to\infty} \exp\left\{-\left[H_p(2t)-H_p(t)\right]\right\} \notag \\
          & =\lim_{t\to\infty} \exp\left\{-\left[2^p-1+o(1)\right] H_p(t)\right\} = 0. \notag
\end{align}

\medskip

\underline{If $\eta=0$}, then $\psi(t)=\exp\{-H_0(t)\}$, where $H_0(t)=F_0(W_\theta(t))=\frac12[A(W_\theta(t))]^2$. Eqs.~\eqref{eq:W-infty} and \eqref{eq:A-refined} imply that, as $t \to \infty$,
\begin{align}
A\left(W_\theta(t)\right) &=\log\left(2W_\theta(t)\right)+O\left(W_\theta(t)^{-1}\right) =\log(2\beta_{\vert \theta\vert} t)+O(t^{-1}), \notag
\end{align}
and thus,
\begin{align}
\kappa_L = \lim_{t\to\infty} \frac{\log\psi(2t)}{\log\psi(t)} = \lim_{t\to\infty} \frac{H_0(2t)}{H_0(t)} = 1. \notag
\end{align}
Also, as $t\to\infty$, we have
\begin{align}
H_0(2t)-H_0(t) &=\frac12 \left[A\left(W_\theta(2t)\right)-A\left(W_\theta(t)\right)\right] \left[A\left(W_\theta(2t)\right)+A\left(W_\theta(t)\right)\right] \notag \\
&= \frac12 [ \log(2) + O(t^{-1}) ] [2 \log(2\beta_{\vert \theta\vert} t) + \log(2) + O(t^{-1}) ] \notag\\\
& = \log(2)\log(2\beta_{\vert \theta\vert} t) +  \frac12 [\log(2)]^2 + O(\log (t) / t) \to \infty. \notag
\end{align}
Therefore,
\begin{align}
\lambda_L = \lim_{t\to\infty} \frac{\psi(2t)}{\psi(t)} = \lim_{t\to\infty} \exp\left\{-\left[H_0(2t)-H_0(t)\right]\right\} = 0. \notag
\end{align}

\underline{If $0<\eta<1$}, let $p=\eta$ and $b=b_\eta=(1-p)p^{-2}$, and then, Eq.~\eqref{eq:Hpt-infty-proof} leads to
\begin{align}
\psi(t) &=\left[1+\frac{H_p(t)}{b}\right]^{-b} \sim b^bH_p(t)^{-b} \in\RV_{-pb}  =\RV_{-(1-p)/p}, \qquad t\to\infty. \notag
\end{align}
Therefore,
\begin{align}
\lambda_L &=\lim_{t\to\infty}\frac{\psi(2t)}{\psi(t)} =2^{-pb} =2^{-(1-\eta)/\eta}, \notag \\[1mm]
\kappa_L
&=\lim_{t\to\infty}\frac{\log\psi(2t)}{\log\psi(t)}
=1.
\notag
\end{align}
Combining the three cases proves the results for $\lambda_L(\eta)$ and $\kappa_L(\eta)$.

\medskip

\underline{Part IV: Upper tail:}

\medskip

Let $u=q(s)=1-\psi(s)$. Then $s=\psi^{-1}(1-u)$, and the survival copula $\widehat C(u,u)$ satisfies
\begin{align}
\widehat C(u,u) & =2u-1+C(1-u,1-u) =2[1-\psi(s)]-[1-\psi(2s)] =2q(s)-q(2s). \notag
\end{align}
Then,
\begin{align}
\lambda_U =\lim_{s\downarrow0}\frac{2q(s)-q(2s)}{q(s)}, \qquad \kappa_U =\lim_{s\downarrow0}
\frac{\log(2q(s)-q(2s))}{\log q(s)},\label{eq:upper-diagonal-limits-proof}
\end{align}
whenever the limits exist.

\medskip

\underline{If $-1\leq\eta<1$}, let $w=W_\theta(s)$ and $p=|\eta|$. Equation~\eqref{eq:F-zero-derivation} implies that
\begin{align}
F_p(w) &=w-\frac{1-p^2}{6}w^2+O(w^3), \quad s \downarrow 0. \notag
\end{align}

Recall that for $-1\leq\eta\leq0$, $G_\eta(w)=\exp\{-F_p(w)\}$, and for $0<\eta<1$, $G_\eta(w)=[1+F_p(w)/b_\eta]^{-b_\eta}$, where $b_\eta=(1-\eta)\eta^{-2}$. Applying the exponential and binomial expansions, respectively, we have, as $w \downarrow 0$, thus, as $F_p(w)\downarrow 0$,
\begin{align}
1-G_\eta(w) &= 
\begin{cases}
\displaystyle F_p(w)-\frac12F_p(w)^2+O(F_p(w)^3), & -1\leq\eta\leq0,\\[3mm]
\displaystyle F_p(w)-\frac{b_\eta+1}{2b_\eta}F_p(w)^2 +O(F_p(w)^3), & 0<\eta<1,
\end{cases} \notag \\[3mm]
&=
\begin{cases}
\displaystyle w-\left(\frac{1-p^2}{6} +\frac12\right)w^2+O(w^3), & -1\leq\eta\leq0,\\[3mm]
\displaystyle w-\left(\frac{1-p^2}{6} +\frac12+\frac{\eta^2}{2(1-\eta)}\right)w^2+O(w^3), & 0<\eta<1,
\end{cases} \notag \\[3mm]
&=w-c_2(\eta)w^2+O(w^3), \notag
\end{align}
where, because $(b_\eta+1)/(2b_\eta)=1/2+\eta^2/(2(1-\eta))$ and $p^2=\eta^2$,
\begin{align}
c_2(\eta) := \frac12 + \frac{(\eta_+)^2}{2(1-\eta)} +\frac{1-\eta^2}{6}>0, \qquad \eta_+ = \max(\eta,0). \notag
\end{align}

Because $\psi(s)=G_\eta(W_\theta(s))$,
\begin{align}
q(s) = 1 - \psi(s) = W_\theta(s)-c_2(\eta)W_\theta(s)^2 +O(W_\theta(s)^3). \label{eq:q-via-W-proof}
\end{align}

\underline{If $\theta=-1$}, Eq.~\eqref{eq:W-zero-all} shows $W_{-1}(s)=s$, and therefore, as $s \downarrow 0$, 
\begin{align}
q(s) &=s-c_2(\eta)s^2+O(s^3), \notag \\[1mm]
2q(s)-q(2s) &=2[s-c_2(\eta)s^2] - [2s-4c_2(\eta)s^2]+O(s^3) =2c_2(\eta)s^2+O(s^3), \notag
\end{align}
which proves $\lambda_U=0$ and $\kappa_U=2=1-\theta$.

\underline{If $-1<\theta<0$}, let $p=-\theta\in(0,1)$ and $c_p=p2^{1-p}$. Eq.~\eqref{eq:W-zero-all} shows $W_\theta(s)=s-c_ps^{1+p}+o(s^{1+p})$, and therefore, as $s \downarrow 0$, 
\begin{align}
q(s) &= \left[s-c_ps^{1+p}+o(s^{1+p})\right] -c_2(\eta)\left[s+o(s)\right]^2+O(s^3) =s-c_ps^{1+p}+o(s^{1+p}), \notag \\
2q(s)-q(2s) &=2(s-c_ps^{1+p}) - (2s-c_p2^{1+p}s^{1+p}) +o(s^{1+p}) =c_p(2^{1+p}-2)s^{1+p}
+o(s^{1+p}). \notag
\end{align}
Therefore,
\begin{align}
\lambda_U &=\lim_{s\downarrow0} \frac{2q(s)-q(2s)}{q(s)} =0, \notag \\[1mm]
\kappa_U  &=\lim_{s\downarrow0} \frac{\log(2q(s)-q(2s))}{\log q(s)} =1+p =1-\theta. \notag
\end{align}

\underline{If $\theta=0$}, let $\ell=\ell_s=\log(2/s)$. Eq.~\eqref{eq:W-zero-all} shows 
\begin{align}
W_0(s) &= s\left[1-\frac{1}{\ell}+\frac{1}{2\ell^2}+O(\ell^{-3})\right], \notag
\end{align}
together with Eq.~\eqref{eq:q-via-W-proof} and $s^2=o(s\ell^{-3})$, as $s \downarrow 0$,
\begin{align}
q(s) &=s\left[1-\frac{1}{\ell}+\frac{1}{2 \ell^2}+O(\ell^{-3})\right]. \label{eq:q-theta-zero-proof}
\end{align}

Since $\ell_{2s}=\log\left(2/(2s)\right)= \ell -\log(2)=: \ell - c$, as $s \downarrow 0$ and thus $\ell \to \infty$,
\begin{align}
2q(s)-q(2s) &=2s\left[\left(1-\frac{1}{\ell}+\frac{1}{2\ell^2}\right) -\left(1-\frac{1}{\ell-c}+\frac{1}{2(\ell-c)^2}\right) +O(\ell^{-3}) \right] \notag \\[1mm]
&=2s\left[\left(1-\frac{1}{\ell}+\frac{1}{2\ell^2}\right) -\left(1-\frac{1}{\ell}+\frac{1/2-c}{\ell^2}\right)
+O(\ell^{-3}) \right] \label{eq:upper-theta-zero-diagonal-intermediate-proof} \\[1mm]
&=\frac{2\log(2)s}{\ell^2} \left(1+O(\ell^{-1})\right), \label{eq:upper-theta-zero-diagonal-proof}
\end{align}
where Eq.~\eqref{eq:upper-theta-zero-diagonal-intermediate-proof} is due to $(\ell-c)^{-1}=\ell^{-1}+c\ell^{-2}+O(\ell^{-3})$ and $(\ell-c)^{-2}=\ell^{-2}+O(\ell^{-3})$. Because $q(s)\sim s$ as shown above, Eqs.~\eqref{eq:q-theta-zero-proof} and \eqref{eq:upper-theta-zero-diagonal-proof} imply that
\begin{align}
\lambda_U &=\lim_{s\downarrow0} \frac{2q(s)-q(2s)}{q(s)} =\lim_{s\downarrow0}\frac{2\log(2)}{\ell^2}=0, \notag \\[1mm]
\kappa_U &=\lim_{s\downarrow0} \frac{\log(s)-2\log \ell +O(1)}{\log(s)+o(1)} =1. \notag
\end{align}

\underline{If $0<\theta<1$}, let $a=1-\theta\in(0,1)$. Eq.~\eqref{eq:W-positive-zero} shows that $W_\theta\in\RV_a$ at zero and $W_\theta(s)\to0$. Eq.~\eqref{eq:q-via-W-proof} therefore leads to, as $s \downarrow 0$,
\begin{align}
q(s) &=W_\theta(s)\left[1+O(W_\theta(s))\right] \sim W_\theta(s), \notag
\end{align}
and therefore $q\in\RV_a$ at zero. Then, Eq.~\eqref{eq:upper-diagonal-limits-proof} implies that
\begin{align}
\lambda_U &=\lim_{s\downarrow0} \left(2-\frac{q(2s)}{q(s)}\right) =2-2^a =2-2^{1-\theta}, \notag
\end{align}
and thus, $2q(s)-q(2s) \sim(2-2^a)q(s)$, and then,
\begin{align}
\kappa_U &=\lim_{s\downarrow0} \frac{\log q(s)+\log(2-2^a)+o(1)}{\log q(s)} =1. \notag
\end{align}

Combining all four cases proves the results for the upper tail properties; together with those for the lower tail properties, this completes the proof of the proposition. \QED

\section{Additional Proofs}
\label{appendix:additional-proofs}
\emph{Proof of Proposition~\ref{prop:Kendall-tau}:} Let $(U,V)$ follow an Archimedean copula $C$ with generator $\psi$, $\phi:=\psi^{-1}$, then Kendall's distribution function of $W:=C(U,V)$ is given by
\begin{align}
K_C(w) :=  \P(W \le w) = w-\frac{\phi(w)}{\phi'(w)}. \notag
\end{align}
Therefore, the Kendall's tau can be expressed as
\begin{align}
\tau(C) & = 4\E(W)-1 = 4 \int_0^1[1-K_C(w)] d w -1 \notag \\
        & = 3-4\int_0^1K_C(w) d w = 1+4\int_0^1\frac{\phi(w)}{\phi'(w)} d w, \notag
\end{align}
and then a change of variables with  $w=\psi(t)$ and $\phi(w)=t$ proves the result. \QED

\emph{Proof of Proposition~\ref{prop:Spearman-rho}:} The definition of Spearman's rho is given by
\begin{align}
\rho_S =12 \int_0^1 \int_0^1 C(u,v) d u d v - 3. \notag
\end{align}
Then letting $u=\psi(x)$ and $v=\psi(y)$ proves the result. \QED

\emph{Proof of Proposition~\ref{prop:Blomqvist-beta}:} For a copula $C$, the Blomqvist's beta is defined as
\begin{align}
\beta_B =4C\left(1/2, 1/2\right)-1, \notag
\end{align}
which proves the result. \QED

\bibliographystyle{abbrv}
\bibliography{bib.bib}

\end{document}